\newtheorem{lem}{Lemma}
\newtheorem{remk}{Remark}
\newtheorem{prop}{Proposition}
\DeclareMathOperator{\sign}{sign}
\newcommand{\defeq}{\stackrel{\triangle}{=}}
\def\bOmega{\boldsymbol{\Omega}}
\newcommand{\e}{\mathsf{e}}
\newcommand{\jj}{\mathsf{j}}
\newcommand{\Herm}{\mathsf{H}}
\newcommand{\Trans}{\mathsf{T}}
\newcommand{\x}{\mathsf{x}}
\newcommand{\y}{\mathsf{y}}
\newcommand{\z}{\mathsf{z}}
\newcommand{\bx}{\mathbf{x}}
\newcommand{\bW}{\mathbf{W}}
\newcommand{\bs}{\mathbf{s}}
\newcommand{\bn}{\mathbf{n}}
\newcommand{\bH}{\mathbf{H}}
\newcommand{\ba}{\mathbf{a}}
\newcommand{\bu}{\mathbf{u}}
\newcommand{\br}{\mathbf{r}}
\newcommand{\bbm}{\mathbf{m}}
\newcommand{\bbM}{\mathbf{M}}
\newcommand{\bI}{\mathbf{I}}
\newcommand{\bX}{\mathbf{X}}
\newcommand{\bh}{\mathbf{h}}
\newcommand{\bq}{\mathbf{q}}
\newcommand{\bv}{\mathbf{v}}
\newcommand{\bt}{\mathbf{t}}
\newcommand{\bE}{\mathbf{E}}
\newcommand{\kk}{\kappa}
\newcommand{\bDelta}{\mathbf{\Delta}}
\newcommand{\bGamma}{\boldsymbol{\Gamma}}
\newcommand{\btau}{\boldsymbol{\tau}}
\newcommand{\blambda}{\boldsymbol{\lambda}}
\newcommand{\bLambda}{\boldsymbol{\Lambda}}
\newcommand{\bxi}{\boldsymbol{\xi}}
\newcommand{\bXi}{\boldsymbol{\Xi}}
\newcommand{\Ex}{\mathbb{E}}
\newcommand{\Var}{\mathbb{V}}
\newcommand{\diag}{\mathrm{diag}}
\newcommand{\dd}{\mathrm{d}}
\newcommand{\tx}{\mathrm{tx}}
\newcommand{\thr}{\mathrm{thr}}
\newcommand{\SNR}{\mathrm{SNR}}
\newcommand{\RF}{\mathrm{RF}}
\newcommand{\lc}{\mathrm{lc}}
\newcommand{\elastic}{\mathrm{elastic}}
\newcommand{\electric}{\mathrm{electric}}
\newcommand{\doubt}{\textcolor{red}{??? }}
\def\bphi{\boldsymbol{\phi}}
\def\bDelta{\boldsymbol{\Delta}}
\def\bomega{\boldsymbol{\omega}}
\def\bzero{\boldsymbol{0}}
\def\bone{\boldsymbol{1}}
\def\bvarepsilon{\boldsymbol{\varepsilon}}
\def\Cset{\mathbb{C}}
\def\Rset{\mathbb{R}}
\def\Zset{\mathbb{Z}}
\def\LOS{\mathrm{LOS}}
\def\nLOS{\mathrm{nLOS}}
\def\tmax{\mathrm{max}}
\def\tx{\mathrm{tx}}
\def\rx{\mathrm{rx}}
\def\BS{\mathrm{BS}}
\def\RIS{\mathrm{RIS}}
\def\SNR{\mathrm{SNR}}
\def\mol{\mathrm{mol}}
\def\thr{\mathrm{thr}}
\def\sCN{\mathcal{CN}}
\def\Wset{\mathcal{W}}
\def\Uset{\mathcal{U}}
\def\bigO{\mathcal{O}}
\definecolor{mygreen}{HTML}{4CBB17}
\newcommand{\MDD}[1]{{\color{black} #1}}
\newcommand{\MD}[1]{{\color{black} #1}}
\newcommand\copyrighttext{%
  \footnotesize \textcopyright  2026 IEEE. Personal use of this material is permitted. Permission from IEEE must be obtained for all other uses, in any current or future media, including reprinting/republishing this material for advertising or promotional purposes, creating new collective works, for resale or redistribution to servers or lists, or reuse of any copyrighted component of this work in other works. Digital Object Identifier: \href{https://ieeexplore.ieee.org/abstract/document/11355735}{10.1109/TWC.2026.3652460}}
  \newcommand\copyrightnotice{%
\begin{tikzpicture}[remember picture,overlay]
\node[anchor=north,yshift=0pt] at (current page.north) {\fbox{\parbox{\dimexpr0.95\textwidth-\fboxsep-\fboxrule\relax}{\copyrighttext}}};
\end{tikzpicture}%
}
\newacronym{RIS}{RIS}{reconfigurable intelligent surface}
\newacronym{QoS}{QoS}{quality of service}
\newacronym{LC}{LC}{liquid crystal}
\newacronym{SNR}{SNR}{signal to noise ratio}
\newacronym{TDMA}{TDMA}{time-division multiple-access}
\newacronym{SDMA}{SDMA}{spatial-division multiple-access}
\newacronym{SDR}{SDR}{semidefinite relaxation}
\newacronym{BS}{BS}{base station}
\newacronym{MU}{MU}{mobile user}
\newacronym{NF}{NF}{near-field}
\newacronym{FF}{FF}{far-field}
\newacronym{Tx}{Tx}{transmitter}
\newacronym{Rx}{Rx}{receiver}
\newacronym{AWGN}{AWGN}{additive white Gaussian noise}
\newacronym{w.r.t.}{w.r.t.}{with respect to}
\newacronym{RDE}{RDE}{reaction-diffusion equation}
\newacronym{PDE}{PDE}{partial differential equation}
\newacronym{UPA}{UPA}{uniform planar array}
\newacronym{AO}{AO}{alternating optimization}
\newacronym{SOCP}{SOCP}{second-order cone programming}
\newacronym{AoD}{AoD}{angle of departure}
\newacronym{5G NR}{5G NR}{fifth generation new radio}
\newacronym{RV}{RV}{random variable}
\newacronym{PIN}{PIN}{positive-intrinsic-negative}
\newacronym{RF}{RF}{radio frequency}
\newacronym{MEMS}{MEMS}{micro-electro-mechanical system}
\newacronym{LOS}{LOS}{line of sight}
\newacronym{nLOS}{nLOS}{non-LOS}
\newacronym{MISO}{MISO}{multi-input single-output}
\newacronym{MIMO}{MIMO}{multiple-input multiple-output}
\newacronym{CSI}{CSI}{channel state information}
\newacronym{PDF}{PDF}{probability density function}
\newacronym{PCD}{PCD}{parallel coordinate descent}
\newacronym{mmWave}{mmWave}{millimeter wave}
\newacronym{AC}{AC}{alternating current}
\newacronym{THz}{THz}{terahertz}
\newacronym{GHz}{GHz}{gigahertz}
\begin{document}
\author{
\IEEEauthorblockN{Mohamadreza Delbari$^{\orcidlink{0000-0002-4768-5874}}$, 
Robin Neuder$^{\orcidlink{0000-0002-2909-7841}}$, 
Alejandro Jim\'{e}nez-S\'{a}ez$^{\orcidlink{0000-0003-0468-1352}}$, 
Arash Asadi$^{\orcidlink{0000-0001-9946-4793}}$, 
and Vahid Jamali$^{\orcidlink{0000-0003-3920-7415}}$ 
}
\IEEEauthorblockA{
\thanks{Delbari and Jamali’s work was supported in part by the Deutsche Forschungsgemeinschaft (DFG, German Research Foundation) within the Collaborative Research Center MAKI (SFB 1053, Project-ID 210487104) and in part by the LOEWE initiative (Hesse, Germany) within the emergenCITY Centre under Grant LOEWE/1/12/519/03/05.001(0016)/72. Neuder and Jim\'{e}nez-S\'{a}ez's work was supported by the Deutsche Forschungsgemeinschaft (DFG, German Research Foundation) – Project-ID 287022738 – TRR 196 MARIE within project C09. Asadi's work was in part supported by DFG HyRIS (455077022) and DFG mmCell (416765679). In addition, thanks goes to Merck Electronics KGaA, Darmstadt, Germany, for providing the liquid crystal mixture. This paper was presented in part at the 2024 IEEE International Conference on Communications Workshops (ICC, workshops) [DOI: \href{https://ieeexplore.ieee.org/document/10615422}{10.1109/ICCWorkshops59551.2024.10615422}] in \cite{delbari2024fast}. (\textit{Corresponding author: Mohamadreza Delbari.})\\
Mohamadreza Delbari and Vahid Jamali are with the Resilient Communication Systems Laboratory, Technische Universität Darmstadt, 64283 Darmstadt, Germany (e-mail: mohamadreza.delbari@tu-darmstadt.de;
 vahid.jamali@tu-darmstadt.de).
Robin Neuder and Alejandro Jim\'{e}nez-S\'{a}ez are with the Institute of Microwave Engineering and Photonics, Technische Universität Darmstadt, 64283 Darmstadt, Germany (e-mail: robin.neuder@tu-darmstadt.de;
 alejandro.jimenez\_saez@tu-darmstadt.de).
Arash Asadi is with the Embedded Systems Group, Delft University of Technology, 2628 CD Delft, Netherlands (e-mail: a.asadi@tudelft.nl).}}}
\title{Fast Reconfiguration of \MD{Liquid Crystal}-RISs: \\
Modeling and Algorithm Design}
\maketitle
\copyrightnotice
\begin{abstract}
\Gls{LC} technology is a promising hardware solution for realizing extremely large \glspl{RIS} due to its advantages in cost-effectiveness, scalability, energy efficiency, and continuous phase shift tunability. However, the slow response time of the \MD{\gls{LC}-\gls{RIS} phase shifters}, especially in comparison to the silicon-based alternatives like radio frequency switches and \gls{PIN} diodes, limits the performance. This limitation becomes particularly relevant in \gls{TDMA} applications where \gls{RIS} must sequentially serve users in different locations, as the phase-shifting response time of \MD{\gls{LC}-\gls{RIS} phase shifters} can constrain system performance. This paper addresses the slow phase-shifting limitation of \gls{LC} by developing a physics-based model for the time response of an \gls{LC} unit cell and proposing a novel phase-shift design framework to reduce the transition time. Specifically, exploiting the fact that \gls{LC}-\gls{RIS} at \gls{mmWave} bands have a large number of elements, we optimize the \gls{LC} phase shifts based on user locations, eliminating the need for full \gls{CSI} and minimizing reconfiguration overhead. Moreover, instead of focusing on a single point, the \gls{RIS} phase shifters are designed to optimize coverage over an area. This enhances communication reliability for mobile users and mitigates performance degradation due to user location estimation errors. 
The proposed \gls{RIS} phase-shift design minimizes the transition time between configurations, a critical requirement for \gls{TDMA} schemes. Our analysis reveals that the impact of \gls{RIS} reconfiguration time on system throughput becomes particularly significant when \gls{TDMA} intervals are comparable to the reconfiguration time. In such scenarios, optimizing the phase-shift design helps mitigate performance degradation while ensuring specific quality of service requirements. Moreover, the proposed algorithm has been tested through experimental evaluations, which demonstrate that it also performs effectively in practice.
\end{abstract}

 \begin{IEEEkeywords}
 Liquid crystal, reconfigurable intelligent surfaces, tuning time, and fast reconfiguration.
 \end{IEEEkeywords}

\glsresetall

\glslocalunset{RIS}

\section{Introduction}
\IEEEPARstart{R}{ECONFIGURABLE} intelligent surfaces (RISs) have recently emerged as a solution to realize programmable radio environments in the field of wireless communications \cite{qingqing2019IRS,di2019smart,yu2021smart}. \Gls{RIS}s comprise a large array of sub-wavelength elements capable of dynamically adjusting the phases of reflected electromagnetic waves. By intelligently controlling these phase shifts, \gls{RIS}s establish virtual links between \gls{BS} and \gls{MU}, when the direct \gls{LOS} \gls{BS}-\gls{MU} link is obstructed \cite{najafi2020physics,bjornson2020rayleigh}. These \gls{RIS}s require a large number of elements to compensate for the significant path loss caused by double-path reflection \cite{najafi2020physics}. \MD{Such electrically large surfaces introduce several unique challenges, including the need for scalable low-cost hardware architectures, low-complexity low-overhead \gls{CSI} acquisition and \gls{RIS} configuration schemes, and accurate channel models that account for \gls{NF} propagation regime.}

\begin{figure}[tbp]
    \centering
    \includegraphics[width=0.5\textwidth]{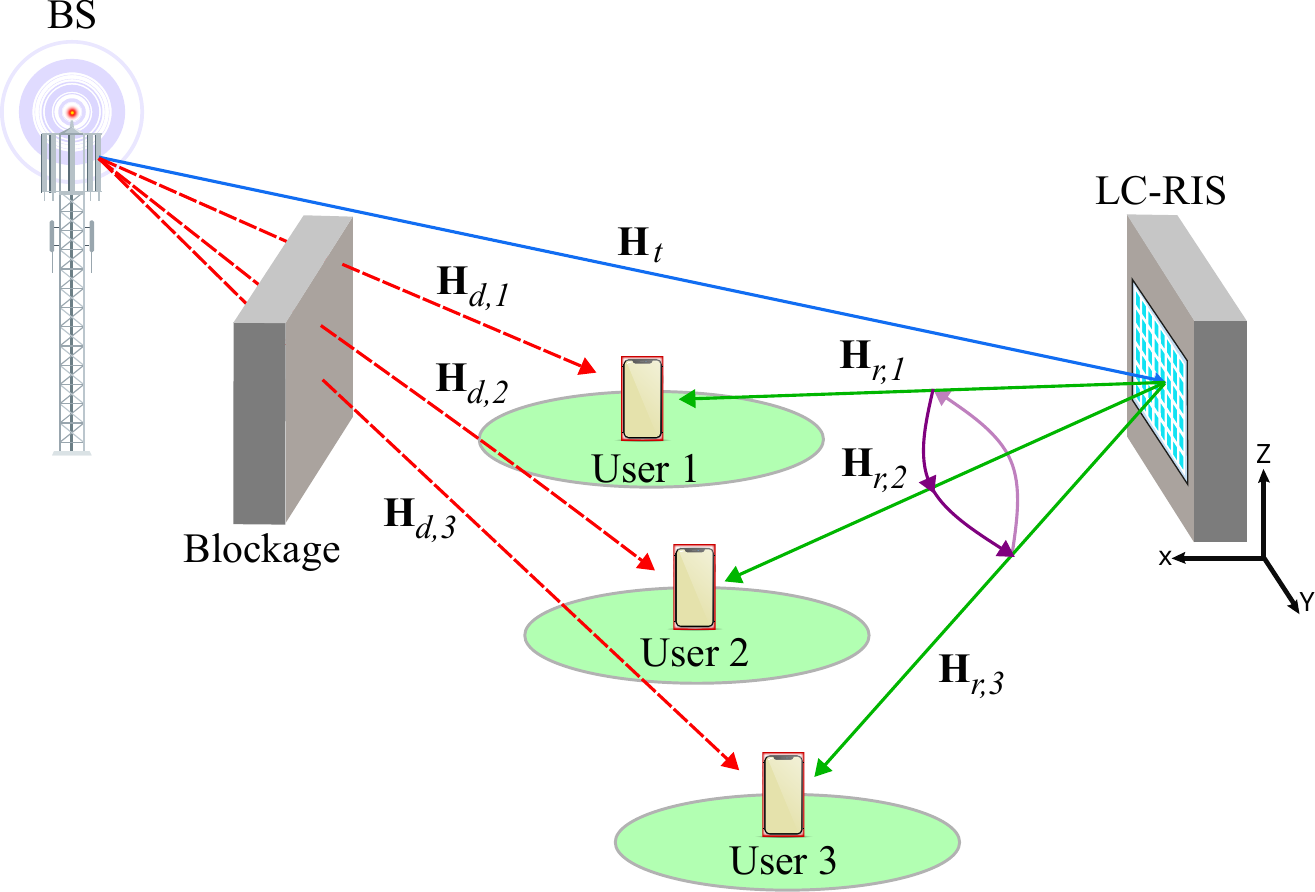}
        \caption{An \gls{RIS} assists to establish a virtual link between a transmitter and multiple receivers in a time-sequential manner, while the direct channel is blocked by an obstacle.}
\vspace{-5mm}
    \label{fig: system model}
\end{figure}
\MD{Several technologies exist for implementing RIS phase shifters, including varactor diodes \cite{tawk2012varactor,wolff2023continuous}, \gls{PIN} diodes \cite{tang2020wireless,Zeng2022,Zeng2024}, \gls{MEMS} \cite{ferrari2022reconfigurable,schmitt20213}, and \gls{LC}. Each technology presents a unique trade-off between performance, cost, and scalability. For instance, while \gls{PIN} diodes and \gls{MEMS} offer very fast response times, they typically provide only discrete phase shifts. Achieving finer phase resolution requires multiple diodes per element, which increases both cost and complexity, hindering scalability, especially at \gls{mmWave} frequencies where the number of \gls{RIS} elements is large. Varactor diodes, while offering continuous tuning, are generally limited to lower frequency ranges and are less effective for \gls{mmWave} applications. Collectively, these limitations challenge the scalability of silicon-based technologies for future \gls{mmWave} and sub-\gls{THz} systems. In contrast, \gls{LC} is notable for its cost-effectiveness, scalability, energy efficiency, and continuous phase tunability \cite{Kim2023Independtly}. Despite these advantages, the adoption of \gls{LC}-based \gls{RIS}s is constrained by their relatively slow phase response times compared to alternative technologies such as \gls{PIN} diodes, \gls{MEMS}, and varactor diodes \cite{wolff2023continuous} (See \cite[Table~I]{jimenez2023reconfigurable} for a comprehensive comparison of these technologies).}

\MD{Given their unique advantages, this paper focuses on \gls{LC}-\glspl{RIS}. We note that their relatively slow response time confines their primary application domains to quasi-static environments, where they can be leveraged to create stable and favorable propagation paths that overcome blockages or extend coverage\footnote{\MD{Due to their slow response, \gls{LC}-\glspl{RIS} are not suitable for compensating fast-fading variations at the microsecond timescale typically associated with resource block lengths. Consequently, a location-based illumination approach, as considered in this paper, provides a practical and effective strategy for exploiting the benefits of \gls{LC}-\glspl{RIS} in quasi-static environments. It is also worth noting that even for faster \gls{RIS} technologies (e.g., \gls{PIN}-diode-based designs), tracking fast fading remains highly challenging because of the prohibitive overhead associated with \gls{CSI} acquisition.}}. Examples of quasi-static radio environments include fixed wireless access and for serving low-to-moderate mobility users in indoor hotspots. Even in these targeted applications,  the reconfiguration latency remains a critical performance bottleneck, since the transition period represents an overhead during which users cannot be effectively served. In a \gls{TDMA} system, for instance, reducing this overhead translates directly into higher effective data rates. Addressing this fundamental hardware limitation is therefore essential for realizing the full potential of \gls{LC}-\glspl{RIS}, and constitutes the central focus of this paper.}


The potential of \gls{LC}-\glspl{RIS} to enhance wireless channels remains relatively underexplored within the wireless communications community. Existing studies have investigated \glspl{RIS} implemented with \gls{LC} technology from several perspectives \cite{ghannam2021reconfigurable, guirado2022mm, aboagye2022design, ndjiongue2021re, neuder2023compact, wang2004correlations, Wang2005, delbari2024temperature}. For instance, \cite{aboagye2022design,ndjiongue2021re} designed LC-based RISs specifically for visible light communication systems. From an experimental viewpoint, the behavior of \gls{LC}-RISs was analyzed in \cite{neuder2023compact}, whereas the physical mechanisms influencing the \gls{LC} response time were explored in detail in \cite{wang2004correlations, Wang2005}. Additionally, a comprehensive overview of LC-RIS properties was presented in \cite{jimenez2023reconfigurable}.

In this paper, by investigating the physical dynamics of \gls{LC} molecules, we develop a comprehensive model for their response time. Exploiting this model, we formulate an optimization problem to design RIS configurations that minimize the transition time between users. \MD{By providing a precise model and optimization framework for this fundamental hardware latency, our work provides a critical physical-layer foundation that enables more practical and efficient system integration and protocol design \cite{Lu2024}.} Our main contributions are summarized as follows:
\begin{itemize}
\item \textbf{Physics-based Modeling of LC-RIS Phase-shift Response Time:} \MD{We develop an analytical model for the response time of an LC-based phase shifter, rooted in the physical properties of its constituent \gls{LC} molecules}. Specifically, we study the \gls{PDE} that governs the spatio-temporal evolution of \gls{LC} molecules upon applying an \MD{electric field}. Using this, we develop a model for the phase shift of each \gls{LC}-\gls{RIS} unit cell as a function of $t$.
\item \textbf{Fast-transitionable Phase-shift Design:} We formulate an optimization problem to design multiple \gls{RIS} phase-shift configurations that enable fast transitions between them. \MD{As the \gls{CSI} acquisition is a challenging task for \glspl{RIS} \cite{Lu2024}, we formulate the phase-shift design problem based on user locations. The accuracy of this design increases in \gls{mmWave} since \gls{LOS} component becomes dominant \cite{Mukherjee2017}.}
Additionally, to increase the communication reliability against the location estimation error, the \gls{RIS} is designed to serve users over a designated area rather than a single point.
\item \textbf{Proposed Algorithm:} Because this problem is non-convex, finding the global optimal solution is computationally challenging. Based on the dual Lagrangian formulation and exploiting the large number of the \gls{LC}-\gls{RIS} elements, we propose an approach to decouple the optimization across users and \gls{RIS} elements, enabling parallel and iterative updates. Based on this result, we propose an algorithm that obtains a sub-optimal, but low-complexity, and scalable solution. We analyse the complexity of the proposed algorithm and show that it increases linearly with the number of \gls{LC}-\gls{RIS} elements.
\item \textbf{Performance Evaluation:} We assess the performance of the proposed \gls{RIS} design through a comprehensive set of numerical simulations. These results demonstrate that the proposed approach significantly reduces reconfiguration time (e.g., by $66\%$ for the considered setup), thereby improving the effective system throughput compared to the transition-unaware benchmark schemes.
\item \textbf{Experimental Verification:} Finally, we validate our algorithm through a real-world test in an indoor scenario \cite{neuder2023architecture}. We observe that the optimized phase shift design significantly reduces the required transition time compared to a linear phase shift design algorithm, which only focuses on maximizing the final received \gls{SNR}.
\end{itemize}
The remainder of this paper is organized as follows. In Section \ref{System, Channel, and LC Model}, we present the system, channel, and steady state of \gls{LC} models. Section \ref{Transition-Aware LC-RIS} details the proposed optimization framework based on the dynamics \gls{LC} models, followed by simulation results in Section \ref{simulation result}. Then, we test our algorithm in an experimental implementation in Section~\ref{experimental results}. Finally, Section \ref{conclusion} concludes the paper.

\textit{Notation:} Bold uppercase and lowercase letters are used to denote matrices and vectors, respectively.  $(\cdot)^\Trans$ and $(\cdot)^\Herm$ denote the transpose and Hermitian, respectively. Moreover, $\boldsymbol{0}_n$ and $\boldsymbol{1}_{n}$ denote a column vectors of size $n$ whose elements are all zeros and ones, respectively. $\bigO$ is the big-O notation.  $[\bX]_{m,n}$ and $[\bx]_{n}$ denote the element in the $m$th row and $n$th column of matrix $\bX$ and the $n$th entry of vector $\bx$, respectively. $\mathcal{CN}(\boldsymbol{\mu},\boldsymbol{\Sigma})$ denotes a complex Gaussian random vector with mean vector $\boldsymbol{\mu}$ and covariance matrix $\boldsymbol{\Sigma}$. $x!$ denotes factorial of number $x$. $\Ex\{\cdot\}$ and $\Var\{\cdot\}$ represent expectation and variance, respectively. Finally, $\Zset$, $\mathbb{R}$, and $\mathbb{C}$ represent the sets of integer, real, and complex numbers, respectively. 

\section{System, Channel, and LC-RIS Cell Model}
\label{System, Channel, and LC Model}
In this section, we first present the considered system model. Subsequently, we introduce the adopted channel model and steady-state model of \gls{LC} phase shifters.
\subsection{System Model}
\label{System model}
We consider a narrow-band downlink communication scenario comprising a \gls{BS} with $N_t$ antenna elements, an \gls{RIS} with $N$ \gls{LC}-based unit cells, and $K$ single-antenna \glspl{MU} as illustrated in Fig.~\ref{fig: system model}. The users are served in a \gls{TDMA} scheme\footnote{\MD{We adopt a simple setting consisting of a single cell and a \gls{TDMA} scheme for serving single-antenna users in order to focus primarily on the \gls{LC}-\gls{RIS} transition time between different phase-shift configurations. Nevertheless, the model proposed in Section~\ref{Transient dynamic model of LC phase shifter} for LC-RIS time response is valid for more sophisticated scenarios and the algorithm proposed in Section~\ref{Problem Formulation and the Proposed Solution} can be extended to other multiple-access schemes and multiple-antennas users \cite{Lu2025}.}}. 
The received signal at user $k$ in their allocated time slot is given by:
\begin{equation}
\label{Eq:IRSbasic}
	y_k = \big(\bh_{d,k}^\Herm + \bh_{r,k}^\Herm \bGamma_k \bH_t \big) \bx_k +n_k,\quad k=1,\dots,K,
\end{equation}
where $\bx_k\in\Cset^{N_t},\,\forall k$ is the transmit signal vector, $y_k\in\Cset$ is the received signal vector at the $k$th \gls{MU}, and  $n_k\in\Cset$ represents the \gls{AWGN} at the $k$th \gls{MU}, i.e., $n_k\sim\sCN(0,\sigma_n^2)$, where $\sigma_n^2$ is the noise power. Assuming linear beamforming, the transmit vector $\bx_k\in\Cset^{N_t}$ can be written as $\bx_k=\bq_ks$, where $\bq_k\in\Cset^{N_t}$ is the beamforming vector for user $k$ and $s\in\Cset$ is the data symbol. Assuming $\Ex\{|s|^2\}=1$, the beamformer satisfies the transmit power constraint $\sum_{k=1}^{K}\|\bq_k\|^2\leq P_t$, where $P_t$ is the total power constraint.
Moreover,  $\bh_{d,k}\in\Cset^{N_t}, \bH_t\in\Cset^{N\times N_t}$, and $\bh_{r,k}\in\Cset^{N}$ denote the BS-UE, BS-RIS, and RIS-\gls{MU} channel matrices, respectively. Furthermore, $\bGamma_k\in\Cset^{N\times N}$ is a diagonal matrix with main diagonal entries $[\bGamma_k]_n=[\bOmega_k]_n\e^{\jj[\bomega_k]_n}$ denoting the reflection coefficient applied by the $n$th \gls{RIS} unit cell comprising phase shift $[\bomega_k]_n$  and reflection amplitude $[\bOmega_k]_n$ when the \gls{RIS} serves $k$th user in its allocated time slot. We assume throughout the paper,  $[\bOmega_k]_n=1,\,\forall n$ holds for \gls{LC}-\gls{RIS} cells regardless of $k$ \cite{yang2020design} and we can control the phase shifts of each cell, $[\bomega_k]_n$.
\subsection{Channel Model}
\label{Channel model}
Given the assumption of an extremely large \gls{RIS}, the distances between the RIS and both the \gls{BS} and the \gls{MU} may fall within the \gls{NF} regime of the RIS \cite{Liu2023nearfield,delbari2024nearfield}. Therefore, an \gls{NF} channel model is adopted. Furthermore, RISs are typically installed at elevated heights, ensuring \gls{LOS} links between the RIS and both the BS and \gls{MU}s. High-frequency further emphasizes the dominance of these LOS links over \gls{nLOS} links. As a result, the channels are modeled using Rician fading with a high $K$-factor, reflecting the significant contribution of LOS components relative to nLOS components. For the sake of simplicity, we present the model of general \MD{$\bH\in\Cset^{N_\rx\times N_\tx}$} where $N_\tx$ and $N_\rx$ are the number of transmit and receive antennas, respectively.

A Rician \gls{MIMO} channel model can be written as
\begin{equation}
\label{eq: channel model}
    \bH=\sqrt{\frac{K_f}{K_f+1}}\bH^{\mathrm{LOS}}+\sqrt{\frac{1}{K_f+1}}\bH^{\mathrm{nLOS}},
\end{equation}
where $K_f$ denotes the $K$-factor
and determines the relative power of the \gls{LOS} component to the \gls{nLOS} components of the channel. \MD{In \gls{NF} regime, $\bH^\LOS$ and $\bH^\nLOS$ are functions of the locations of antennas in order to capture the underlying spherical wave propagation. This leads to \cite{Liu2023nearfield}}
\begin{IEEEeqnarray}{cc} 
	[\bH^\LOS]_{m,n} = \, c_0\e^{\jj\kk\|\bu_{\rx,m}-\bu_{\tx,n}\|}\label{Eq:LoSnear},\\
    \bH^\nLOS_s=c_s\ba_{\rx}(\bu_{s})\ba_{\tx}^\Trans(\bu_{s}),\\
 {[\ba_{\tx}(\bu_{s})]_n}   = \e^{\jj\kk\|\bu_{\tx,n}-\bu_{s}\|}
 \,\text{and}\, 
 {[\ba_{\rx}(\bu_{s})]_m}   = \e^{\jj\kk\|\bu_{\rx,m}-\bu_{s}\|}\!\!,\quad\label{Eq:nLoSnearPoint}
\end{IEEEeqnarray}
\MD{where $\bH^\LOS$ denotes the LOS \gls{NF} channel matrix and is only a function of antenna locations in \gls{Tx} and \gls{Rx}}, $c_0$ represents the channel amplitude of the LOS path, and $\bu_{\tx,n}$ and $\bu_{\rx,m}$ are the locations of the $n$th \gls{Tx} antenna and the $m$th \gls{Rx} antenna, respectively. Moreover, $\kk=2\pi/\lambda$ is the wave number with $\lambda$ being the wavelength, $\ba_{\tx}(\cdot)\in \Cset^{N_{\tx}}$ and $\ba_{\rx}(\cdot)\in \Cset^{N_{\tx}}$ denote the \gls{Tx} and \gls{Rx} \gls{NF} array response, respectively, $\bu_{s}$ is the location of the $s$th scatterer, and $c_s$ denotes the end-to-end amplitude of the $s$th non-LOS path \cite{delbari2024far}.

For the \gls{LOS} link between the \gls{BS} and the \gls{MU}s, we incorporate an additional high penetration loss due to a blockage, as suggested in \cite{Phillips2013}. Consequently, in the optimization section, we assume $\bh_{d,k} \approx \bzero$ for all $k$ to simplify the analytical analysis while we account for its impact in the simulation results in Section~\ref{simulation result}.
\subsection{Steady-state Model of LC Phase Shifter}
\label{Steady-state model of LC phase shifter}
\begin{figure}[tbp]
	\centering
    \includegraphics[width=65mm]{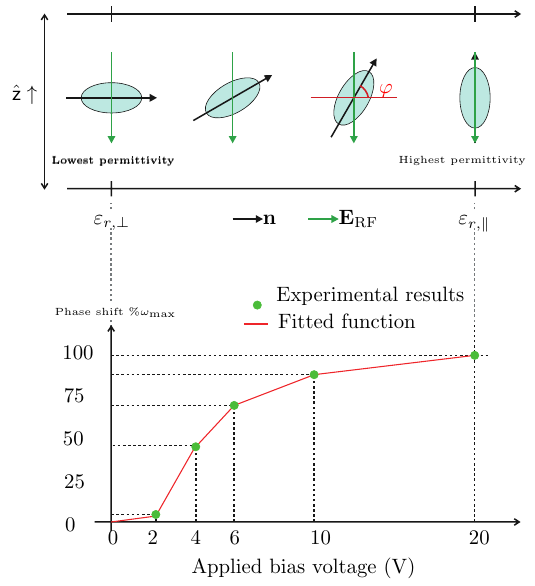}
    \caption{Phase shift vs. applied voltage for LC with 4.6 $\mu m$ LC layer thickness. The experimental data is taken from \cite{neuder2023architecture}, being fitted by a piecewise linear function.}
    \vspace{-5mm}
    \label{fig:V_phase}
\end{figure}
The anisotropy of LC molecules allows the control of the phase shift of \gls{RF} signals by adjusting the orientation of the \gls{LC} molecules.
 To achieve this effect, a thin \gls{LC} layer, typically on the order of micrometers, is placed between two electrodes. As illustrated in Fig. \ref{fig:V_phase}, in the absence of an applied voltage (a low kHz \gls{AC} voltage for biasing), i.e., $E=0$, the \gls{LC} molecules remain in a relaxed state. As a result, the radio-frequency electric field $\bE_\RF$ is perpendicular to the unit vector $\bn$ ($\|\bn\|=1$), representing the average molecular alignment in a nematic LC\footnote{\MD{ The nematic phase is a state of LCs where the rod-like molecules exhibit long-range orientational order (i.e., they tend to point in the same general direction) but lack long-range positional order. Crucially for RIS applications, this collective orientation can be manipulated by applying an external electric field ($\bE_\RF$).}}. This leads to a minimum permittivity, denoted by $\varepsilon_{\perp}=\varepsilon_{r,\perp}\varepsilon_0$, where $\varepsilon_{r,\perp}$ and $\varepsilon_0$ are minimum relative and vacuum permittivities, respectively. Conversely, under maximum applied voltage ($E=E_{\max}$), most of the \gls{LC} molecules align with the induced external electric field. This alignment causes $\bE_\RF$ to become parallel to $\bn$, resulting in achieving maximum permittivity, denoted by $\varepsilon_{\parallel}=\varepsilon_{r,\parallel}\varepsilon_0$, where $\varepsilon_{r,\parallel}$ is maximum relative permittivity. The maximum achievable phase shift, $\Delta\omega_{\max}$, is proportional to the maximum differential wave number in birefringent\footnote{\MD{ Birefringence is the property of a material where its permittivity depends on the polarization of the EM wave relative to the material's structure. For LCs, this means the permittivity experienced by a wave is different when it is polarized parallel (high permittivity) versus perpendicular (low permittivity) to the long axis of the LC molecules, which is the key mechanism for tuning the phase shift.}} materials, e.g., \gls{LC}s, denoted as $\Delta \kk_{\max}$, and can be expressed as:
\begin{equation}
\Delta\omega_{\max}=\ell_\lc\Delta \kk_{\max},
\label{eq: phase shifter function of kappa}
\end{equation}
where $\ell_\lc$ is the length of the phase shifter. Here, $\Delta \kk_{\max}$ is defined as $\kk_{\parallel}-\kk_{\perp}$, where the wave number $\kk_{g}$ is given by $2\pi f\sqrt{\varepsilon_g\mu_g},\, g\in\{\perp,\parallel\}$. Here, $f$ and $\mu_g$ represent frequency and permeability, respectively. For \gls{LC} materials, $\mu_\perp=\mu_\parallel=\mu_0$ typically holds \cite{garbovskiy2012liquid}, where $\mu_0$ is vacuum permeability. Assuming the phase shift corresponding to $\kk_{\perp}$ as the reference (i.e., $\omega=0$), the effective phase shift that each cell can apply in addition to the reference phase shift is given by:
 \begin{equation}
 \label{eq: omega and kappa function static}
    \omega=\ell_\lc(\kk-\kk_{\perp}),
\end{equation}
 where $\kk_{\perp}\leq \kk\leq \kk_{\parallel}$. The temporal evolution of $\kk$ and $\omega$ is analyzed in the next section, where their dynamic behavior is discussed in detail.

\section{Transition-Aware LC-RIS Phase-shift Design}
\label{Transition-Aware LC-RIS}
In this section, we begin by deriving the relationship between the \gls{LC} phase shifter and time using the underlying fundamental physics principles. Next, using this model, we formulate an optimization that minimizes the transition time among phase-shift configurations and develop an efficient low-complexity solution to this problem.
\subsection{Transient Dynamic Model of LC Phase Shifter}
\label{Transient dynamic model of LC phase shifter}
The operational principle of an \gls{LC} molecule is based on its electromagnetic anisotropy, wherein the electromagnetic characteristics of the \gls{LC} molecule depend on its orientation relative to the \gls{RF} electric field $\bE_\RF$ \cite{jimenez2023reconfigurable}. Due to its ellipsoidal molecular shape, an \gls{LC} molecule exhibits higher permittivity, i.e., a greater phase shift, when the electric field $\bE_\RF$ is aligned with the molecule's major axis $\bn$, compared to when it is aligned with the molecule's minor axis, as depicted in Fig.~\ref{fig:V_phase}. The time required to achieve a certain angle (vector $\bn$ \gls{w.r.t.} $\bE_\RF$) depends on the applied electrically induced force at the electrodes as well as the mechanical characteristics of the \gls{LC} molecules. In the following, we first recap the fundamental properties of the LC molecules which is the basis for derivation of the proposed \gls{LC} response time model in Section~\ref{LC Phase shifter response time}.
\subsubsection{Three-dimensional system} 
\begin{figure}
\begin{subfigure}{0.15\textwidth}
    \centering
    \includegraphics[width=1\textwidth]{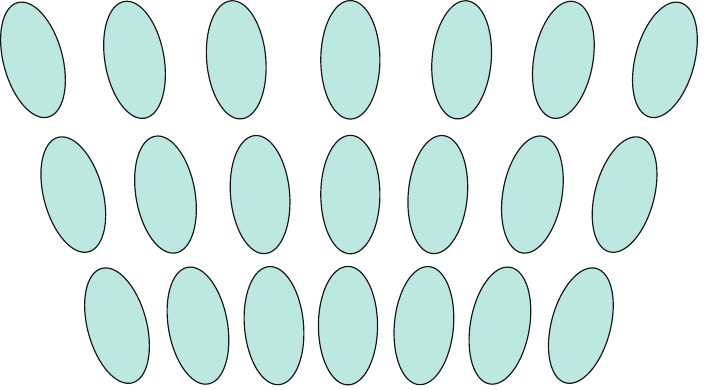}
    \caption{Spray ($K_{11}$)}
    \label{fig:Spray}
\end{subfigure}
\hfill
\begin{subfigure}{0.10\textwidth}
    \centering
    \includegraphics[width=1\textwidth]{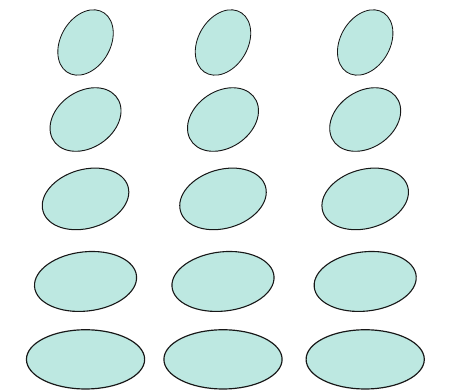}
    \caption{Twist ($K_{22}$)}
    \label{fig:Twist}
\end{subfigure}
\hfill
\begin{subfigure}{0.2\textwidth}
    \centering
    \includegraphics[width=1\textwidth]{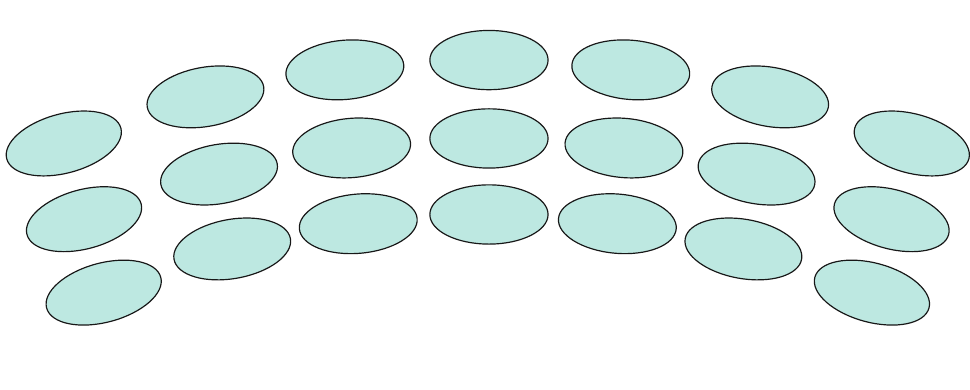}
    \caption{Bend ($K_{33}$)}
    \label{fig:Bend}
\end{subfigure}
\caption{Demonstrations of three basic deformations of \gls{LC}.}
\label{fig: spray twist bend}
\vspace{-5mm}
\end{figure}
Consider \gls{LC} molecules with a unit vector $\bn=[n_x,n_y,n_z]^\Trans$. The time evolution of this director field $\bn$ can be described by the principle of dissipative dynamics. This principle states that the rate of change of the system is proportional to the total forces (or torques) arising from the gradient of the free energy, balanced by viscous dissipation \cite{Selinger2024}:
\begin{equation}
\label{eq: elastic with time}
    \gamma_1 \frac{\partial \bn}{\partial t} = - \frac{\delta F}{\delta \bn} + \lambda_\lc \bn.
\end{equation}
Here, $\frac{\partial \bn}{\partial t}$ is the time derivative of the director field, and $- \frac{\delta F}{\delta \bn}$ is the functional derivative of the free energy \gls{w.r.t.} $\bn$, which gives the torque acting on $\bn$. $\gamma_1$ and $\lambda_\lc$ are rotational viscosity coefficient quantifying resistance to director motion, and Lagrange multiplier enforcing the unit-vector constraint $\|\bn\|=1$, respectively. Assuming no fluid flow and ignoring backflow and inertial effects \cite{Wang2005}, the total free energy $F$ of the nematic LC is the integral over the volume $V$ representing the volume of LC molecules:
\begin{equation}
    F=\int_V f_\elastic+f_\electric \dd v,
\end{equation}
where $f_\elastic$ and $f_\electric$ are the Frank elastic energy density and the electric energy density, respectively. The Frank elastic energy density captures the cost of deforming the director field $\bn$ as follows \cite{frank1958liquid},
\begin{equation}
    \label{eq: elastic}
    f_\elastic=\frac{1}{2} K_{11} \|\nabla \cdot \bn\|^2 + \frac{1}{2} K_{22} \|\bn \cdot \nabla \times \bn\|^2 + \frac{1}{2} K_{33} \|\bn \times \nabla \times \bn\|^2,
\end{equation}
where $K_{11}$, $K_{22}$, and $K_{33}$ are the elastic constants associated with spray, twist, and bend deformation, respectively, see Fig.~\ref{fig: spray twist bend}. When an electric field $\bE=[E_x,E_y,E_z]^\Trans$ is applied, it interacts with the LC molecule, aligning it in the direction of the field. The electric energy density is:
\begin{equation}
    \label{eq: energy density}
    f_\electric = -\frac{1}{2} \bE^\Trans\cdot\bvarepsilon \cdot \bE,
\end{equation}
where $\bvarepsilon$ is the dielectric permittivity tensor, and for a uniaxial LC, $\bvarepsilon$ can be expressed as:
\begin{equation}
    \label{eq: epsilon}
    \bvarepsilon = \varepsilon_\perp \bI + \underbrace{(\varepsilon_\parallel - \varepsilon_\perp)}_{\Delta\varepsilon} \bn\bn^\Trans.
\end{equation}
Here, $\bI$ denotes an identity matrix. Substituting \eqref{eq: elastic} and \eqref{eq: energy density} in \eqref{eq: elastic with time} has in general a complicated form.
However, for the \gls{LC}-\gls{RIS} consisting of LC molecules encapsulated between two large planes can be simplified to an one-dimensional \gls{PDE}. In the following, we focus on a single spatial dimension ($\vec{z}$) by assuming uniformity in the other dimensions, thereby simplifying the problem while retaining the essential characteristics.
\subsubsection{One-dimensional system} Consider an electric field applied along the $\z$-axis, with LC molecules rotating \gls{w.r.t.} this direction. Substituting the one dimension of Frank elastic energy density in \eqref{eq: elastic} and the electric energy density in \eqref{eq: energy density} into \eqref{eq: elastic with time} yields the following equation \cite{ericksen1961conservation,leslie1968some,Wang2005}:
\begin{align}
    \label{eq: elastic 1D}
        &\gamma_1\frac{\partial\varphi}{\partial t}=(K_{11}\cos^2\varphi+K_{33}\sin^2\varphi)\frac{\partial^2 \varphi}{\partial z^2}\\
    &+(K_{33}-K_{11})\sin\varphi\nonumber\times \cos\varphi(\frac{\partial\varphi}{\partial z})^2 
    +\varepsilon_0\Delta\varepsilon E^2\sin\varphi\cos\varphi,
\end{align}
where $E$ is the applied electrical field in $\z$ direction and $\varphi$ is the angle of between major axis of the LC molecule and $\x-\y$ plane, see Fig. \ref{fig:V_phase}. Even in this simplified form, \eqref{eq: elastic 1D} is not straightforward to be solved analytically, which we need to draw insights for system design. To address this issue, we adopt two assumptions, namely $K_{11}=K_{33}=\bar{K}$ \cite{Wang2005} and
\begin{equation}
    \sin\varphi\cos\varphi\approx\Phi(\varphi)\defeq\varphi(1-\frac{\varphi^2}{2}),\,0\leq\varphi\leq\frac{\pi}{2},
\end{equation}
reducing it to:
\begin{equation}
        \label{eq: elastic 1D simple}
\gamma_1\frac{\partial\varphi}{\partial t}=\bar{K}\frac{\partial^2 \varphi}{\partial z^2}+\varepsilon_0\Delta\varepsilon E^2\Phi(\varphi).
\end{equation}
The \gls{PDE} in \eqref{eq: elastic 1D simple} governs the spatio-temporal evolution of the \gls{LC} molecule alignment under an applied electric field. We will use this equation in Section~\ref{LC Phase shifter response time} as the basic to develop a model for the phase shift response time of LC-RISs, which is then used to design a fast reconfigurable phase shift in \gls{TDMA} application. But before then, in order to fully describe $\varphi$ as a function of $t$ and $z$, appropriate boundary conditions are required, which will be discussed in the following section.
\subsubsection{Boundary and initial conditions and solution}
Typically, \gls{LC}-\glspl{RIS} consist of parallel planes with an LC layer positioned between them. To solve \eqref{eq: elastic 1D simple}, we require boundary and initial conditions. Consider two planes located at $z=0$ and $z=d$, where $d$ is the \MD{\gls{LC}-\gls{RIS} phase shifter} thickness. Due to anchoring forces, the LC molecules near these planes satisfy $\varphi\approx0$. Taking into account the initial state, the angle $\varphi(z,t)$ can be described as the solution to the \gls{PDE} in \eqref{eq: elastic 1D simple}, subject to its boundary and initial conditions as follows:
\begin{subequations}
\begin{align}
        \label{eq: PDE 1D simple conditions}
    &~\text {C1:} ~~\varphi(0,t) = 0,
    \\&~\text {C2:} ~~\varphi(d,t) = 0,
    \\&~\text {C3:} ~~\varphi(z,0)=A_1\sin(\frac{\pi z}{d}),
\end{align}
\end{subequations}
where C3 has been confirmed through experimental observations, as detailed in \cite{jakeman1972electro,Wang2005} and $0<A_1<1$ is a constant. The solutions of this \gls{PDE} in two extreme cases $E=0$ and $E=E_{\max}$ are provided in Lemma~\ref{lem: PDE}.
\begin{lem}
\label{lem: PDE}
    The solutions to \eqref{eq: elastic 1D simple}, subject to the constraints C1, C2, and C3 at two extreme cases $E=0$ and $E=E_{\max}$, can be expressed as
\begin{equation}
\label{eq: varphi final}
    \varphi(z,t)=\varphi(z,\infty)+(\varphi(z,0)-\varphi(z,\infty))\e^{-\frac{t}{\tau_\mol}},
\end{equation}
with an error bound of $\bigO(e)$ with assuming $e\defeq E_{\max}^{-2}\frac{\pi^2\bar{K}}{d^2\varepsilon_0\Delta\varepsilon}\ll1$. Here, the LC molecule exponential time constant $\tau_\mol\in\{\tau_\mol^-,\tau_\mol^+\}$ depends on whether $\varphi$ is increasing or decreasing, where $\tau_\mol^-\gg\tau_\mol^+$ holds.
\end{lem}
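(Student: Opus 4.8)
The plan is to treat the two extreme cases separately, since in each the driving field $E$ is constant in time, which reduces \eqref{eq: elastic 1D simple} to an autonomous problem whose long-time behaviour is governed by linearization about the steady state $\varphi(z,\infty)$. In both cases I would (i) identify $\varphi(z,\infty)$ as the stationary solution obtained by setting $\partial_t\varphi=0$, (ii) write $\varphi(z,t)=\varphi(z,\infty)+\psi(z,t)$ and linearize in the deviation $\psi$, and (iii) expand $\psi$ in the Dirichlet eigenbasis $\{\sin(n\pi z/d)\}_{n\ge1}$ dictated by the anchoring conditions C1 and C2, reading off the modal decay rates. The claimed form \eqref{eq: varphi final} then amounts to showing that, up to $\bigO(e)$, a single modal rate dominates the whole transient.

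For $E=0$, equation \eqref{eq: elastic 1D simple} collapses to the linear heat equation $\gamma_1\partial_t\varphi=\bar{K}\partial_{zz}\varphi$, and the only stationary solution compatible with C1 and C2 is $\varphi(z,\infty)=0$. Because the initial condition C3 is \emph{exactly} the first Dirichlet eigenfunction $A_1\sin(\pi z/d)$, separation of variables excites only the $n=1$ mode, giving $\varphi(z,t)=A_1\sin(\pi z/d)\,e^{-t/\tau_\mol^-}$ with $\tau_\mol^-=\gamma_1 d^2/(\bar{K}\pi^2)$. This matches \eqref{eq: varphi final} exactly, with error identically zero, and it fixes the slow constant $\tau_\mol^-$.

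For $E=E_{\max}$, I would first solve the nonlinear stationary equation $\bar{K}\,\varphi_\infty''+\varepsilon_0\Delta\varepsilon E_{\max}^2\Phi(\varphi_\infty)=0$. Since the small parameter $e=E_{\max}^{-2}\pi^2\bar{K}/(d^2\varepsilon_0\Delta\varepsilon)$ multiplies the elastic term, this is a singular perturbation: in the bulk the electric term forces $\Phi(\varphi_\infty)\approx0$, i.e.\ $\varphi_\infty\approx\sqrt{2}$ up to $\bigO(e)$, with thin boundary layers near $z=0,d$ enforcing C1 and C2. Writing $\varphi=\varphi_\infty+\psi$ and linearizing $\Phi(\varphi_\infty+\psi)-\Phi(\varphi_\infty)\approx\Phi'(\varphi_\infty)\psi$, with $\Phi'(\sqrt{2})=1-\tfrac{3}{2}\cdot2=-2$ in the bulk, yields the constant-coefficient equation $\gamma_1\partial_t\psi=\bar{K}\partial_{zz}\psi-2\varepsilon_0\Delta\varepsilon E_{\max}^2\psi$, whose $n$th mode decays at rate $1/\tau_n=[\bar{K}(n\pi/d)^2+2\varepsilon_0\Delta\varepsilon E_{\max}^2]/\gamma_1$. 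The decisive observation is that for $e\ll1$ the electric term dominates, so $1/\tau_n=(2\varepsilon_0\Delta\varepsilon E_{\max}^2/\gamma_1)\,(1+\bigO(e n^2))$; all relevant modes therefore share a common rate $1/\tau_\mol^+$ with $\tau_\mol^+=\gamma_1/[\bar{K}(\pi/d)^2+2\varepsilon_0\Delta\varepsilon E_{\max}^2]$ up to $\bigO(e)$. Consequently $\psi(z,t)=\psi(z,0)\,e^{-t/\tau_\mol^+}(1+\bigO(e))$ irrespective of the spatial shape of $\psi(z,0)$, which reassembles into \eqref{eq: varphi final}. Finally, $\tau_\mol^-/\tau_\mol^+=1+2/e\gg1$ gives the stated ordering $\tau_\mol^-\gg\tau_\mol^+$.

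The main obstacle is the nonlinear actuation case: unlike the relaxation case, the initial deviation $\psi(z,0)=A_1\sin(\pi z/d)-\varphi_\infty(z)$ is \emph{not} a single eigenmode, so a genuine single-exponential answer is only possible because the large field compresses the spectrum $\{1/\tau_n\}$ into a band of width $\bigO(e)$ around $1/\tau_\mol^+$. I would spend the most care making this quantitative: controlling the boundary-layer contribution to $\varphi_\infty$ (the layers have width $\bigO(\sqrt{e})$), bounding the spatial variation of $\Phi'(\varphi_\infty)$, which departs from $-2$ only inside those layers, and showing that both the modal-rate spread and the higher-order terms of the $\sin\varphi\cos\varphi\approx\Phi(\varphi)$ approximation feed into the stated $\bigO(e)$ error rather than corrupting the leading single-exponential behaviour.
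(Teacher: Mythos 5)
Your $E=0$ case coincides exactly with the paper's: the PDE reduces to the heat equation, C3 is the first Dirichlet eigenfunction, only the $n=1$ mode is excited, and $\tau_\mol^-=\gamma_1 d^2/(\bar K\pi^2)$ with zero error. For $E=E_{\max}$, however, you take a genuinely different route from the paper, and it has a gap. The paper does not linearize about the steady state; it imports the known closed-form solution $\varphi(z,t)=\frac{\pi}{2}\bigl[1+(\frac{\pi^2}{4A_1^2}-1)e^{-t/\tau_r}\bigr]^{-1/2}\sin(\pi z/d)$ from Wang et al., posits the single-exponential surrogate with $\tau_\mol^+=\alpha\tau_r$, and then bounds the error \emph{integrated over time}: $\int_0^\infty|\varphi_{\mathrm{exact}}-\varphi_{\mathrm{approx}}|\,\dd t\le\tau_\mol^+\beta=\bigO(E_{\max}^{-2})$, which is $\bigO(e)$ essentially because the entire transient is over in a time window of width $\tau_\mol^+\propto E_{\max}^{-2}$.

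The gap in your argument is that linearizing $\Phi$ about $\varphi_\infty$ cannot control the whole transient, because the initial deviation $\psi(z,0)=A_1\sin(\pi z/d)-\varphi_\infty(z)$ is $O(1)$ and $\Phi'(\varphi)=1-\tfrac{3}{2}\varphi^2$ changes sign along the trajectory: near $\varphi\approx A_1\sin(\pi z/d)$ one has $\Phi'>0$, so the field initially \emph{amplifies} the tilt (this is the Freedericksz instability), and indeed the exact solution grows like $A_1 e^{t/(2\tau_r)}$ at early times --- a rate of the opposite sign and half the magnitude of your linearized decay rate. Hence your pointwise conclusion $\psi(z,t)=\psi(z,0)e^{-t/\tau_\mol^+}(1+\bigO(e))$ is false: at intermediate times $t\sim\tau_r$ the single-exponential interpolant differs from the exact logistic-type solution by an $O(1)$ amount (e.g.\ roughly $0.4$ rad at $z=d/2$ for $A_1=0.5$), and only the integrated-in-time error is $\bigO(e)$. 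Your spectral-band-compression observation (all linearized modal rates agree to relative order $\bigO(en^2)$) is correct but only describes the tail of the relaxation; to get the lemma as stated you must either invoke the exact nonlinear solution as the paper does, or otherwise handle the genuinely nonlinear mid-transient, and you must state in what norm the $\bigO(e)$ error is measured. The discrepancy between your bulk fixed point $\sqrt{2}$ (zero of the cubic surrogate $\Phi$) and the paper's $\pi/2$ is a minor artifact that the paper absorbs into the tuning parameter $\alpha\ge 1$ in $\tau_\mol^+=\alpha\tau_r$.
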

\begin{proof}
    The proof is provided in Appendix~\ref{app: PDE proof}.
\end{proof}
The assumption $e\ll1$ is valid for typical physical parameters \cite{blinov2012electrooptic}. We have shown that the average angle of the LC molecules $\varphi$ evolves as an exponential function of time at two extreme cases. Based on this result, the next section explores in detail the time response characteristics of the LC phase shifter.
\subsubsection{LC phase shifter response time}
\label{LC Phase shifter response time}
 In Proposition~\ref{prop: linearity in time}, we establish that while $\varphi$ follows an exponential function \gls{w.r.t.} time, $\omega$ can be expressed as a summation of multiple exponential terms over time.
\begin{prop}
\label{prop: linearity in time}
    Based on the result of the Lemma~\ref{lem: PDE}, the LC phase shifter $\omega$ evolves as a summation of exponential functions over time, given by
    \begin{equation}
    \label{eq: sum integral final in prop}
      \omega(t)=\ell_\lc\sum_{p=0}^{\infty}D_p\e^{-\frac{pt}{\tau_\mol}},
    \end{equation}
where $D_p$ are constant coefficients \gls{w.r.t.} time.
\end{prop}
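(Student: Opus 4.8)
The plan is to exploit two facts: by Lemma~\ref{lem: PDE} the molecular tilt $\varphi(z,t)$ is \emph{affine} in the single decaying exponential $X\defeq\e^{-t/\tau_\mol}$, and the phase shift $\omega$ depends on $\varphi$ only through an analytic map. Composing an analytic function with an affine function of $X$ produces a convergent power series in $X$, i.e.\ a sum of exponentials $\e^{-pt/\tau_\mol}$, which is exactly the claimed form.

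First I would make the $\varphi$-dependence of $\omega$ explicit. Starting from the static relation \eqref{eq: omega and kappa function static}, $\omega=\ell_\lc(\kk-\kk_\perp)$, I would write the local wave number as $\kk(\varphi)=2\pi f\sqrt{\mu_0\,\varepsilon_\eff(\varphi)}$, where the effective permittivity seen by $\bE_\RF$ follows from the uniaxial tensor \eqref{eq: epsilon} as $\varepsilon_\eff(\varphi)=\varepsilon_\perp+\Delta\varepsilon\sin^2\varphi$; this correctly returns $\kk_\perp$ at $\varphi=0$ and $\kk_\parallel$ at $\varphi=\pi/2$. Since $\varphi$ varies across the cell thickness, the phase shift is the average of the local wave number over $z\in[0,d]$, so that $\omega(t)=\tfrac{\ell_\lc}{d}\int_0^d\big[\kk(\varphi(z,t))-\kk_\perp\big]\,\dd z$.

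Next I would expand in time. On the physical range $0\le\varphi\le\pi/2$ the map $\kk(\varphi)$ is real-analytic, because the radicand $\varepsilon_\eff$ stays strictly positive and bounded away from its branch point. Substituting $\varphi(z,t)=\alpha(z)+\beta(z)X$ with $\alpha(z)\defeq\varphi(z,\infty)$ and $\beta(z)\defeq\varphi(z,0)-\varphi(z,\infty)$ from Lemma~\ref{lem: PDE}, the composition $G_z(X)\defeq\kk\big(\alpha(z)+\beta(z)X\big)$ is analytic in $X$ on a neighborhood of $[0,1]$ and hence admits a convergent expansion $G_z(X)=\sum_{p\ge0}g_p(z)X^p$; explicitly, writing $\kk(\varphi)=\sum_{n\ge0}c_n\varphi^n$ and using the binomial theorem gives $g_p(z)=\sum_{n\ge p}c_n\binom{n}{p}\alpha(z)^{n-p}\beta(z)^{p}$. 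Integrating over $z$ and restoring $X=\e^{-t/\tau_\mol}$ collects everything into $\omega(t)=\ell_\lc\sum_{p\ge0}D_p\e^{-pt/\tau_\mol}$, where $D_p\defeq\tfrac1d\int_0^d g_p(z)\,\dd z$ (with the $-\kk_\perp$ offset absorbed into $D_0$) is manifestly independent of $t$, as claimed.

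The main obstacle is the convergence bookkeeping in the last step: one must justify (i) that the power series $G_z(X)=\sum_p g_p(z)X^p$ actually converges up to $X=1$, i.e.\ at the initial time $t=0$, which requires the nearest complex singularity of $\kk\big(\alpha(z)+\beta(z)\,\cdot\big)$ to lie outside the closed unit disk, and (ii) the interchange of the infinite $p$-summation with the $z$-integration. Both follow from uniform convergence once $\varphi(z,t)\in[0,\pi/2]$ is seen to lie strictly inside the domain of analyticity of $\kk(\cdot)$ for all $z,t$, so that a Weierstrass $M$-test dominates $|g_p(z)X^p|$ uniformly and permits a Fubini-type exchange; this is where the care is needed. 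A secondary, modeling-level point is pinning down the exact effective-permittivity map $\varepsilon_\eff(\varphi)$ and the averaging-over-$z$ convention, but the argument is robust to these choices, requiring only that $\omega$ be a spatial average of an analytic function of $\varphi$.
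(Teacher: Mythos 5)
Your proposal is correct and follows essentially the same route as the paper's Appendix~B: express $\omega$ as a spatial average over $z\in[0,d]$ of an analytic function of $\varphi$, Taylor-expand that function around $\varphi(z,\infty)$ so that Lemma~\ref{lem: PDE}'s affine dependence $\varphi(z,t)-\varphi(z,\infty)=(\varphi(z,0)-\varphi(z,\infty))\e^{-t/\tau_\mol}$ turns the $p$th-order term into $\e^{-pt/\tau_\mol}$, then interchange the sum and the $z$-integral. The only divergence is the intermediate physical model: the paper derives $\kk(\varphi)$ from a geometric effective chord length $l(\varphi)=b\big(1+(\tfrac{b^2}{a^2}-1)\sin^2\varphi\big)^{-1/2}$ of the ellipsoidal molecule rather than your $\varepsilon_{\eff}(\varphi)=\varepsilon_\perp+\Delta\varepsilon\sin^2\varphi$, but as you note the argument only needs analyticity of the $\varphi\mapsto\kk$ map, and your explicit attention to keeping the coefficients $g_p(z)$ inside the integral and to the convergence/interchange justification is in fact slightly more careful than the paper's treatment.
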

\begin{proof}
    The proof is provided in Appendix~\ref{app: linearity in time}.
\end{proof}
There are two important observations about the summation in \eqref{eq: sum integral final in prop}, which we use to develop a simple approximation:
\begin{itemize}
    \item The term  $\e^{-\frac{pt}{\tau_\mol}}$  decays more rapidly for larger  $p$.
    \item The coefficients $D_p$ tend to decrease as $p$ increases due to the behavior of \eqref{eq: sum integral}.
\end{itemize}
Motivated by these insights, we approximate \eqref{eq: sum integral final in prop} with a single exponential function with an exponent that is not necessarily identical to $\tau_\mol$ and can be found to improve the approximation accuracy by fitting it to measurement data. The resulting model for the \gls{LC}-\gls{RIS} time response is parameterized as follows:
\begin{equation}
\label{eq:general time response}
    [\bomega(t)]_n=[\bomega_d]_n+([\bomega_0]_n-[\bomega_d]_n)e^{\frac{-t}{[\btau_{c}]_n}},
\end{equation}
where $[\bomega_d]_n$ and $[\bomega_0]_n$ are the desired and initial phase shift applied in $n$th \gls{LC}-\gls{RIS}, respectively. In addition, $[\btau_{c}]_n\in\{\tau_c^+,\tau_c^-\}$ is the LC
director reorientation time constant, which is a function of LC parameters $\gamma$, $\bar{K}$, and $E$. It is equal to $\tau_c^+$ if $[\bomega_d]_n\geq [\bomega_0]_n$ and $\tau_c^-$ if $[\bomega_d]_n < [\bomega_0]_n$ \cite{Wang2005}. To enhance the accuracy of our proposed model, we incorporate experimentally obtained values for the exponent factor. The experimental values for $\tau_c^+$ and $\tau_c^-$ have been reported in \cite{neuder2023architecture}, see Fig. \ref{fig:time response+-}. This figure suggests that the proposed model with tuned parameters can indeed follow the experimental data.

\begin{figure}
    \centering
\includegraphics[width=0.5\textwidth]{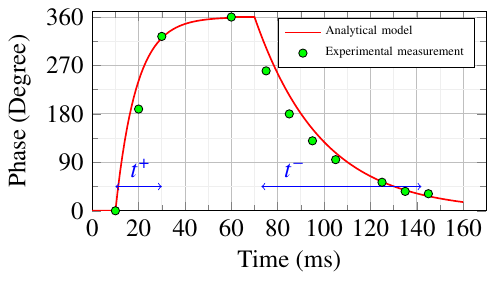}
    \caption{Experimental result of an LC phase shifter response time (green points) \cite[Fig.~4a]{neuder2023architecture}, and analytical model (red line). The time from 10\% (90\%) to reach 90\% (10\%) of the desired phase is 15~ms (72~ms) for positive (negative) phase shifts, corresponding to exponent $\tau^+_c=9$~ms ($\tau^-_c=29$~ms).}
    \label{fig:time response+-}
    \vspace{-5mm}
\end{figure}

\subsubsection{Over-and undershooting technique} 
\label{Over-and undershooting technique}
In order to accelerate the time response of \textit{each individual} \MD{\gls{LC}-\gls{RIS} phase shifter}, we can exploit so-called techniques named over- and undershooting \cite{sayyah1992anomalous}. To do that, when the phase of the \gls{LC} must increase, i.e., $[\omega_d]_n>[\omega_0]_n$ the $V_{\max}$ is applied to the corresponding cell. In contrast, when it needs to be decreased, i.e., $[\omega_d]_n<[\omega_0]_n$, zero voltage which is related to relaxed state is applied. In both cases, we change the voltage control to the $[\bv_d]_n$ when the phase reaches the desired value. The required time for $n$th cell to reach its desired phase shift by exploiting this technique can be reduced as follows
\begin{equation}
    \label{eq: time positive}    [\bt_r]_n=\tau^+_c\ln{\left(\frac{\omega_{\text{max}}-[\bomega_0]_n}{\omega_{\text{max}}-[\bomega_d]_n}\right)}=\tau^+_c\ln{\left(1+\frac{[\bomega_d]_n-[\bomega_0]_n}{\omega_{\text{max}}-[\bomega_d]_n}\right)},
\end{equation}
where $[\bt_r]_n$ determines the required time for $n$ cell to reach its desired phase. Similarly, to decrease the phase shift, i.e., $[\omega_d]_n<[\omega_0]_n$, we first set the voltage to zero using the undershooting technique so the required time is obtained as
\begin{equation}
    \label{eq: time negative}    
    [\bt_r]_n=\tau^-_c\ln{\left(\frac{[\bomega_0]_n-\omega_{\min}}{[\bomega_d]_n-\omega_{\min}}\right)}=\tau^-_c\ln{\left(1+\frac{[\bomega_0]_n-[\bomega_d]_n}{[\bomega_d]_n-\omega_{\min}}\right)},
\end{equation}
where we usually assume $\omega_{\min}=0$.
\subsection{Problem Formulation and the Proposed Solution}
\label{Problem Formulation and the Proposed Solution}
In the following, we employ the derived model in Section~\ref{Over-and undershooting technique} for the LC-RIS time response in order to develop phase shifters that enable fast reconfigurations. We consider a \gls{TDMA} setup where, in each time slot, the \gls{RIS} is configured to serve an area that includes the user, rather than focusing on a single point as in \cite{delbari2024fast}. \MD{Serving an area instead of a specific point improves communication reliability because the \gls{RIS} can maintain the required \gls{QoS} despite user movement within the area or in the presence of location estimation errors \cite[Fig.~3]{Jamali2022lowtozero}}. Achieving this objective necessitates deriving multiple phase-shift configurations that not only maximize the minimum SNR across the target area but also ensure that the time required to switch between configurations remains minimal. We define \gls{SNR} of user $k$ as follows
\begin{IEEEeqnarray}{ll}
\label{eq: SNR}
    \SNR_k(\bu_k)=\frac{|\bh_k^\Herm(\bu_k)\bq_k|^2}{\sigma^2_n}, 
\end{IEEEeqnarray}
where 
$\bh_k^\Herm(\bu_k)=\bh_{r,k}^\Herm(\bu_k) \bGamma_k \bH_t$ represents the end-to-end channel at the $\bu_k\in\Uset_k$ location, $\Uset_k$ denotes the set of possible locations of user $k$ where $k=1, \cdots, K$, and $\bq_k$ denotes the beamforming vector in \gls{BS} for user $k$. Since user location data may be imprecise, we define the \gls{QoS} metric as:
\begin{IEEEeqnarray}{ll}
\label{eq: min SNR}
\SNR_k^{\min}\triangleq\underset{\bu_k\in\Uset_k}{\min}\frac{|\bh_k^\Herm(\bu_k)\bq_k|^2}{\sigma^2_n}. 
\end{IEEEeqnarray}
The transition time for the $n$th cell from $[\omega_0]_n$ to $[\omega_d]_n$ is determined by the differential between these two phase shifts. Our objective is to achieve phase shifts that satisfy the required \gls{SNR}, $\gamma_k^\thr$, for any user location within the target area, while minimizing the time needed for reconfiguration. Assuming a given order for the serving the users\footnote{In a multi-user scenario where $K>2$, the order in which users are served affects the overall reconfiguration process, i.e., there are $(K-1)!$ possible permutations. In section~\ref{simulation result}, we investigate the impact of user order through simulations.}, we formulate the following optimization problem
\begin{subequations}
\label{eq:optimization 1}
\begin{align}
    \text {P1:}\quad&~\underset{\bomega_k,\bq_k,\,\,\forall k}{\min}~\sum_{k=1}^K\underset{n}{\max}~[\bt_k]_n
    \label{eq:optimization 1 a}\\&~\text {s.t.} ~~\frac{|\bh_k^\Herm(\bu_k)\bq_k|^2}{\sigma^2_n}\geq \gamma_k^\thr,\,\,, \forall \bu_k\in\Uset_k,\,\forall k,
    \label{eq:optimization 1 b}\\&\quad\hphantom {\text {s.t.} } 0\leq [\bomega_k]_n < \omega_\tmax, \forall n, k,
    \label{eq:optimization 1 c}\\&\quad\hphantom {\text {s.t.} } \sum_{k=1}^K\|\bq_k\|_2^2\leq P_t, \forall k,
    \label{eq:optimization 1 d}
\end{align}
\end{subequations}
where each $[\bt_k]_n, \forall k$ is defined in \eqref{eq: time positive} and \eqref{eq: time negative} \gls{w.r.t.} current user $k$ and previous user $k-1$. $\bomega_k\in\Cset^N$ and $\bq_k\in\Cset^{N_t}$ are phase shift and beamforming variables, respectively. Here, \eqref{eq:optimization 1 b} is the minimum $\SNR_k$ constraint for $k$th user to be served by \gls{RIS}. (\ref{eq:optimization 1 c}) is the realizable phase shift range where $\omega_\tmax$ shows the maximum achievable phase shift by each \gls{LC} element in the \gls{RIS}. (\ref{eq:optimization 1 d}) forces the transmit power in the \gls{BS}. The problem P1 is a non-convex problem due to the non-convexity of the objective function and constraint (\ref{eq:optimization 1 b}). Two variable vectors $\bomega_k$ and $\bq_k$ are also coupled in (\ref{eq:optimization 1 b}). Hence, obtaining the global solution for this problem is challenging in term of complexity. Due to these challenges, we adopt \gls{AO}, where the problem is separated into two sub-problems where in each one, we fix one of the variable vectors and optimize the problem \gls{w.r.t.} the other variable vector.\\
\begin{figure}
    \centering
    \includegraphics[width=0.4\linewidth]{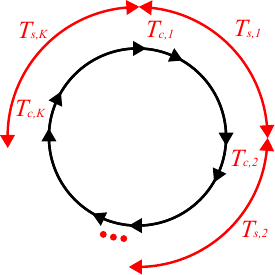}
    \caption{Configuring time ($T_c$) and switching time ($T_s$) are shown for each user when they are served by an \gls{RIS} sequentially.}
    \label{fig:sequentially}
    \vspace{-5mm}
\end{figure}

\textbf{Beamformer design:} 
\MD{For a fixed RIS phase shift $\bomega_k,\,\forall k$, the problem P1 reduces to a feasibility search for the beamformer $\bq_k$. One approach to obtain a feasible solution is to choose to maximize SNR in $\bq_k$ for the given $\bomega_k$. By doing so, we create the largest possible SNR margin above the required threshold $\gamma_k^\thr$. This SNR margin is beneficial in the next step of the \gls{AO}, where $\bq_k$ is fixed and $\bomega_k$ is optimized. A larger margin means that the SNR constraint remains satisfied for a wider range of $\bomega_k$ values, thus hopefully expanding the feasible set for the RIS phase-shift optimization (there is no guarantee). This provides the algorithm with more flexibility to find a $\bomega_k$ that minimizes the primary objective, which is the transition time. Hence, $\forall k = \{1, ..., K\},$ the following problem is solved:}
\begin{subequations}
\label{eq:optimization beamformer}
\begin{align}
    \text{P2:}\quad&~\underset{\bq_k}{\max}~\SNR_k^{\min}
    \\&\quad\hphantom {\text {s.t.} } \sum_{k=1}^K\|\bq_k\|_2^2\leq P_t,
\end{align}
\end{subequations}
where $\SNR_k^{\min}$ was defined in \eqref{eq: min SNR}. Under the assumptions of blockage, i.e., $\bh_{d,k}\approx\bzero,\,\forall k$, and dominant \gls{LOS} BS-RIS link, i.e., $K_f\to+\infty$, we can rewrite channel as follows:
\begin{equation}
    \label{eq: channel reformulation}
    \bh_{k}(\bu_k)=\bh_{r,k}^\Herm\bGamma_k\ba_\RIS(\bu_\BS)\ba_\BS^\Herm(\bu_\RIS),
\end{equation}
where $\ba_\BS(\cdot)$ and $\ba_\RIS(\cdot)$ are steering vectors in \gls{BS} toward the \gls{RIS} and \gls{RIS} toward the \gls{BS}, respectively. They were defined in \eqref{Eq:nLoSnearPoint} by substituting $\bu_s$ with $\bu_\RIS$ and $\bu_\BS$ which are the locations of the \gls{RIS} and \gls{BS}.
This is now a standard SNR maximization problem in \gls{MISO} systems, which has the following matched-filter precoder solution \cite{tse2005fundamentals}:
\begin{equation}
\label{eq: beamformer design}
\bq_k\overset{(a)}{\approx}\frac{\sqrt{P_t}}{\|\ba_\BS(\bu_\RIS)\|}\ba_\BS(\bu_\RIS).
\end{equation}
This problem is feasible when $\underset{\bu_k\in\Uset_k}{\min}P_t\frac{\|\bh_k(\bu_k)\|^2}{\sigma_n^2}\geq\gamma_k^\thr,\,\forall k$.
Because the chosen beamformer does not depend on the RIS phase shifter, we denote this specific beamforming vector as $\bq_\LOS$ and from now, we focus on optimizing the \gls{RIS} phase shifts.\\

\textbf{RIS design:} For the fixed BS beamformer given in \eqref{eq: beamformer design} the phase configuration problem is given by:
\begin{subequations}
\label{eq:optimization 3}
\begin{align}
    \text {P3:}\quad&~\underset{\bomega_k,\,\,\forall k}{\min}~\sum_{k=1}^K\underset{n}{\max}~[\bt_k]_n
    \\&~\text {s.t.} ~~\SNR_k^{\min}\geq \gamma_k^\thr,\,\, \forall k
    \\&\quad\hphantom {\text {s.t.} } 0\leq [\bomega_k]_n < \omega_\tmax, \forall n, k.
\end{align}
\end{subequations}
First, we redefine the $\SNR_k$ in \eqref{eq: SNR} \gls{w.r.t.} \gls{RIS} phase shifts:
\begin{equation}
    \label{eq: SNR new}
    \SNR_k=\underbrace{m_k+\bv_k^\Herm\bbm_k+\bbm_k^\Herm\bv_k}_{\approx0\text{ because the \gls{BS}-\gls{MU} link is blocked}}+\bv_k^\Herm\bbM_k\bv_k,
\end{equation}
where $\bv_k=[\e^{j[\bomega_k]_1}, ..., \e^{j[\bomega_k]_N}]^\Trans$ is phase shift vector in \gls{RIS} for user $k$, $\bbm_k^\Trans=\frac{(\bh_{d,k}^\Herm\bq_k)\bq_k^\Herm\bH_t^\Herm\diag(\bh_{r,k}^\Herm)}{\sigma_n^2}$, and $\bbM_k=\frac{\diag(\bh_{r,k})\bH_t\bq_k\bq_k^\Herm\bH_t^\Herm\diag(\bh_{r,k}^\Herm)}{\sigma_n^2}$. For each point within the area allocated to user $k$, the $\SNR$ expressed as $\SNR_k(\bu_k)=\bv_k^\Herm\bbM_k(\bu_k)\bv_k, \forall \bu_k\in\Uset_k$, where each element corresponds to a specific $\bu_k\in\Uset_k$. Next, we introduce an auxiliary variable $t_k^{\max}$ for each user, which satisfies the constraint $t_k^{\max}\geq[\bt_k]_n, \forall n\in\{1, \cdots, N\}$. Using this formulation, the problem P3 can be reformulated as:

\begin{subequations}
\label{eq:optimization 4}
\begin{align}
    \text {P4:}\quad&~\underset{t_k^{\max},\bomega_k,\,\,\forall k}{\min}~\sum_{k=1}^K t_k^{\max}
    \label{eq:optimization 4 a}\\&~\text {s.t.} ~~\bv_k^\Herm\bbM_k(\bu_k)\bv_k\geq \gamma_k^\thr,\,\, \forall \bu_k\in\Uset_k,\forall k,
    \label{eq:optimization 4 b}\\&\quad\hphantom {\text {s.t.} } [\bt_k]_n\leq t_k^{\max},\,\, \forall n,k,
    \label{eq:optimization 4 c}\\&\quad\hphantom {\text {s.t.} } 0\leq [\bomega_k]_n < \omega_\tmax, \forall n, k.
    \label{eq:optimization 4 d}
\end{align}
\end{subequations}
In P4, constraints (\ref{eq:optimization 4 c}) and (\ref{eq:optimization 4 d}) are convex and linear \gls{w.r.t.} variable $[\bomega_k]_n,\,\forall k$, respectively. However, (\ref{eq:optimization 4 b}) is non-convex \gls{w.r.t.} variable $[\bomega_k]_n$. In the following, we develop a sub-optimal solution to problem P4 based on the Lagrangian formulation and exploiting the fact that $N$ is large. 

Let us define $\bs_k\!\triangleq\![\bv_k^\Herm\bbM_k(\bu_k^{(1)})\bv_k, \cdots, \bv_k^\Herm\bbM_k(\bu_k^{(|\Uset_k|)})\bv_k]^\Trans\!\!\!,$ where $\bu_k^{(j)}$ denotes $j$th location point of the possible locations for user $k$ in $\Uset_k$. We define the Lagrange function $L(\bW,\bLambda,\bXi)$ as \cite{cvx}
\begin{equation}
\label{eq: lagrangian}
    \sum_{k=1}^K \Big(t_k^{\max}+\blambda_k^\Trans(\gamma_k^\thr\bone_{|\Uset_k|}-\bs_k)+\bxi_k^\Trans(\bt_k-t_k^{\max}\bone_{N})\Big),
\end{equation}
where $\blambda_k\in\Rset_+^{|\Uset_k|}$ and $\bxi_k\in\Rset_+^N,\,\forall k$ denote the associated Lagrangian multipliers. Let us assume the cardinality of the allocated area for all users is equal, i.e., $|\Uset_1|= \cdots, =|\Uset_K|=|\Uset|$. Moreover, $\bLambda=[\blambda_1, \cdots, \blambda_K]\in\Rset_+^{|\Uset|\times K}$ and $\bXi=[\bxi_1, \cdots, \bxi_K]\in\Rset_+^{N\times K}$ are matrices that collect the Lagrangian multipliers. $\bW=[\bomega_1,\cdots\bomega_K]\in\Wset^{N\times K}$ where $\Wset=\{\omega\in\Rset|0\leq\omega\leq\omega_{\max}\}$. Based on these definitions, the dual Lagrangian problem can be written
\begin{subequations}
\label{eq:optimization 5}
\begin{align}
    \text {P5:}\quad&~\underset{\bLambda,\bXi}{\max}~\underset{\bW,t_k^{\max},\,\forall k}{\inf}~L(\bW,\bLambda,\bXi)
    \\&~\text {s.t.} ~~\blambda_k\geq0,\,\bxi_k\geq 0, \forall k,
    \\&\quad\hphantom {\text {s.t.} } \bW\in\Wset^{N\times K}.
\end{align}
\end{subequations}
This problem is still non-convex due to the non-convexity of each element in $\bs_k,\,\forall k$ vector in the cost function. As can be seen from \eqref{eq: lagrangian}, the phase-shift configuration of all users is coupled together in $\bt_k,\,\forall k$. To be more exact, the phase shift set of each user affects the required configuring time for the next and previous phase-shift configuration. To cope with this issue, we continue the analysis of this problem based on a few assumptions, as will be detailed below. We will show in Section~\ref{simulation result} that despite these simplifying assumptions, the proposed phase-shift design is able to significantly reduce the LC-RIS reconfiguration time.
\subsubsection{User decoupling} Our first assumption is about the decoupling user phase shifts. With this assumption, we solve P5 iteratively where at each iteration, we focus on one of the $\bomega_k,\,k=\{1, \cdots, K\}$ where all other $\bomega_{k'},\,k'\neq k$ are fixed. Let us define for $k=1,..., K$:
\begin{IEEEeqnarray}{ll}
\Delta\bomega_k\triangleq
\begin{cases}
    \bomega_k-\bomega_{k-1},\quad &\mathrm{if}\,\, k\neq 1\\
    \bomega_{1}-\bomega_K,\,\,&\mathrm{if}\,\, k=1.
\end{cases}
\end{IEEEeqnarray}
Then, we can transform the problem P5 into a decoupled problem on each user $k$. Let us define $L_k$ for $k$th user as
\begin{align}
    \label{eq: lagrangian user k} L_k\defeq&t_k^{\max}+t_{k+1}^{\max}+\blambda_k^\Trans(\gamma_k^\thr\bone_{|\Uset|}-\bs_k)\nonumber\\
    +&\bxi_k^\Trans(\bt_k-t_k^{\max}\bone_{N})+\bxi_{k+1}^\Trans(\bt_{k+1}-t_{k+1}^{\max}\bone_{N}),
    \end{align}
where $\bt_k$ and $\bt_{k+1}$ are functions of $\bomega_k$, $\bomega_{k-1}$ and $\bomega_{k+1}$. Here, we assumed  $\bomega_{k-1}$ and $\bomega_{k+1}$ are fixed and without change at each iteration. From now on, we optimize $L_k(\bomega_k,\blambda_k,\bxi_k,\bxi_{k+1})$ sequentially for users $k=1, \cdots, K$. Due to the non-convexity of the $\bs_k$, we still cannot find a tractable solution for this problem. Hence, we apply the second assumption in the following.
\subsubsection{Decoupling on each RIS element} 
The phase shifts of an LC-RIS are inherently coupled within the vector \(\bs_k\), as represented in \eqref{eq: lagrangian user k}. This coupling significantly increases computational complexity, particularly as the number of RIS elements grows. To mitigate this issue, we employ the \gls{PCD} method \cite{wright2015coordinate}. A key challenge in using this approach here is that \(\bs_k\) represents a vector over multiple spatial points, making direct optimization intractable. To address this, we adopt a strategy inspired by \cite{Jamali2022lowtozero}, where each RIS element is associated with a single representative point in the targeted area. The method consists of the following steps:\\
\textbf{Step 1: Mapping RIS elements to pre-defined points:} We adopt a strategy similar to those proposed in \cite{Jamali2022lowtozero,delbari2024far}. In particular, we assign each RIS element to a single point within the set \(\Uset_k\). This mapping simplifies the problem by ensuring a one-to-one correspondence between RIS elements and spatial points. With this mapping, we approximate the $L_k$ as
    \begin{align}
    \label{eq: lagrangian user k over n} L_k\approx&t_k^{\max}+t_{k+1}^{\max}+\sum_{n=1}^N\left([\blambda_k]_n(\gamma_k^\thr-\SNR(\bu_k^{(n)}))\right)\nonumber\\
    +&\bxi_k^\Trans(\bt_k-t_k^{\max}\bone_{N})+\bxi_{k+1}^\Trans(\bt_{k+1}-t_{k+1}^{\max}\bone_{N}),
    \end{align}
where $\bu_k^{(n)}$ represents the point in $\Uset_k$ associated with the $n$-th RIS element. Unlike \eqref{eq: lagrangian user k}, where optimization is performed over the entire set for each element, each LC-RIS element is optimized for a specific assigned point in \eqref{eq: lagrangian user k over n}. This phase shift mapping is effective when the number of RIS elements ($N$) is sufficiently large relative to the targeted coverage area. According to the $\bu_k^{(n)}$ point, there is a best phase shift for $n$th element to maximize the SNR in the targeted point. We denote that optimized phase shift for the $n$-th element and $k$th user as $\bphi_k(\bu_k^{(n)})$.\\
\textbf{Step 2: PCD optimization method:} After assigning each RIS element to a specific point, we optimize the phase shifts using the \gls{PCD} method. Given that the original function $\SNR(\bu_k^{(n)})$ depends on all elements of $[\bomega_k]_1,\,\cdots,\,[\bomega_k]_N$, we define a coordinate-wise function for each element in the following Lemma to reduce more the complexity of the optimization.
\begin{lem}
\label{theorem: fourier}
For sufficiently large $N$, blocked BS-\gls{MU} link, dominant \gls{LOS} link between both BS-RIS and RIS-\gls{MU}, and fixing all variables $[\bomega_k]_n,\,\forall n$ except one of them ($[\bomega_k]_p$), the $L_k$ can be written as
\begin{equation}
\label{eq: summation L_n}
    L_k= C_k+[L_k]_p,
\end{equation}
where $C_k$ is a constant for $k$th user and
\begin{align}
\label{eq: L_n}
    [L_k]_p=&[\bxi_{k+1}]_p[\bt_{k+1}]_p+[\bxi_k]_p[\bt_k]_p\nonumber\\
    &-[\blambda_k]_p\cos([\bomega_k]_p-\bphi_k(\bu_k^{(p)})).
\end{align}
\end{lem}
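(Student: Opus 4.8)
The plan is to substitute the rank-one \gls{LOS} channel model into the \gls{SNR} and then split $L_k$ in \eqref{eq: lagrangian user k over n} into the part that varies with the single coordinate $[\bomega_k]_p$ and a remainder $C_k$ that is constant once all other phases are frozen. Inspecting \eqref{eq: lagrangian user k over n}, the coordinate $[\bomega_k]_p$ can enter through only three places: the weighted transition-time terms $\bxi_k^\Trans\bt_k$ and $\bxi_{k+1}^\Trans\bt_{k+1}$, and the penalized \gls{SNR} sum $\sum_n[\blambda_k]_n(\gamma_k^\thr-\SNR(\bu_k^{(n)}))$. The auxiliary scalars $t_k^{\max}$, $t_{k+1}^{\max}$ and the constant $\gamma_k^\thr$ contribute nothing that depends on $[\bomega_k]_p$ and are absorbed directly into $C_k$.

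For the transition-time terms I would use that, by \eqref{eq: time positive}--\eqref{eq: time negative}, the entry $[\bt_k]_n$ is a function of $[\bomega_k]_n$ and the frozen neighbour $[\bomega_{k-1}]_n$ only, while $[\bt_{k+1}]_n$ depends on $[\bomega_k]_n$ and the frozen $[\bomega_{k+1}]_n$ only. Hence both $[\bt_k]_n$ and $[\bt_{k+1}]_n$ are independent of $[\bomega_k]_p$ for every $n\neq p$. Writing the inner products as $\bxi_k^\Trans\bt_k=\sum_n[\bxi_k]_n[\bt_k]_n$ and likewise for $k+1$, only the $n=p$ summands survive as variable terms, giving exactly $[\bxi_k]_p[\bt_k]_p+[\bxi_{k+1}]_p[\bt_{k+1}]_p$; the remaining summands join $C_k$.

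The main work is the \gls{SNR} sum, and the key simplification is the one-to-one mapping of Step~1. Substituting the rank-one \gls{LOS} model and the matched beamformer \eqref{eq: beamformer design} into \eqref{eq: SNR new}, each $\SNR(\bu_k^{(n)})$ becomes a squared magnitude $\big|\sum_{m=1}^N g_m(\bu_k^{(n)})\e^{\jj[\bomega_k]_m}\big|^2/\sigma_n^2$, whose expansion isolates the $[\bomega_k]_p$ dependence into $2\,\mathrm{Re}\{g_p\e^{\jj[\bomega_k]_p}\sum_{m\neq p}\overline{g_m}\e^{-\jj[\bomega_k]_m}\}$, a pure sinusoid in $[\bomega_k]_p$. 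Under the mapping, element $p$ is held responsible only for the \gls{SNR} at its own assigned point $\bu_k^{(p)}$, so for $n\neq p$ the (small, for large $N$) dependence of $\SNR(\bu_k^{(n)})$ on $[\bomega_k]_p$ is neglected and folded into $C_k$. It then remains to rewrite the surviving sinusoid from $-[\blambda_k]_p\SNR(\bu_k^{(p)})$ in the canonical form $-[\blambda_k]_p\cos([\bomega_k]_p-\bphi_k(\bu_k^{(p)}))$, where $\bphi_k(\bu_k^{(p)})$ is precisely the conjugate-matched phase maximizing the contribution of element $p$ to the \gls{SNR} at $\bu_k^{(p)}$; the positive amplitude multiplying the cosine (arising from $|g_p|$ and the coherent sum of the remaining aligned elements) is absorbed into the multiplier $[\blambda_k]_p$, a rescaling that leaves the per-coordinate minimizer unchanged. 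Collecting all constant pieces into $C_k$ then yields \eqref{eq: summation L_n}--\eqref{eq: L_n}.

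I expect the delicate step to be the justification that element $p$'s influence on the \gls{SNR} at the other points $\bu_k^{(n)}$, $n\neq p$, may be discarded: this is exactly where the assumptions of large $N$ relative to $|\Uset_k|$ and the dominant, coherent \gls{LOS} beam enter, ensuring that perturbing a single coordinate mainly affects its assigned point. The remaining algebra — expanding the magnitude square and identifying $\bphi_k(\bu_k^{(p)})$ with the matched phase — is routine.
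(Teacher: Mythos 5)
Your proposal is correct and follows essentially the same route as the paper's proof: both exploit the rank-one LOS structure of $\bbM_k$ to expand $\SNR_k(\bu_k^{(p)})$ coordinate-wise into a sinusoid in $[\bomega_k]_p$ with matched phase $\bphi_k(\bu_k^{(p)})$, invoke large $N$ so that the residual sum over $m\neq p$ has approximately zero phase, and absorb its amplitude into a redefined $[\blambda_k]_p$. Your explicit separation of the transition-time terms and your flagged step of neglecting the influence of $[\bomega_k]_p$ on $\SNR(\bu_k^{(n)})$ for $n\neq p$ merely spell out what the paper's appendix leaves implicit.
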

\begin{proof}
    The proof is provided in Appendix~\ref{app: theorem fourier}.
\end{proof}
 Using Lemma~\ref{theorem: fourier}, we can derive the optimal phase shifts for each element independently, reducing the problem's dimensionality. To further reduce computational complexity, we optimize all elements in parallel using the result of Lemma~\ref{theorem: fourier}. Now, we have $[L_k]_n$ in \eqref{eq: L_n}, $\forall n$ where 
\begin{equation}
    [\bt_k]_n=\begin{cases}
   \tau^-_c\ln{(\frac{[\bomega_{k-1}]_n-\omega_{\min}}{[\bomega_k]_n-\omega_{\min}})},\,\,&\mathrm{if}\,\, [\bomega_{k-1}]_n>[\bomega_k]_n,\\
    \tau^+_c\ln{(\frac{\omega_{\text{max}}-[\bomega_{k-1}]_n}{\omega_{\text{max}}-[\bomega_k]_n})},\,\,&\mathrm{if}\,\, [\bomega_{k-1}]_n<[\bomega_k]_n,
\end{cases}
\end{equation}
 and 
 \begin{equation}
     [\bt_{k+1}]_n=\begin{cases}
   \tau^-_c\ln{(\frac{[\bomega_{k}]_n-\omega_{\min}}{[\bomega_{k+1}]_n-\omega_{\min}})},\,\,&\mathrm{if}\,\, [\bomega_{k}]_n>[\bomega_{k+1}]_n,\\
    \tau^+_c\ln{(\frac{\omega_{\text{max}}-[\bomega_{k}]_n}{\omega_{\text{max}}-[\bomega_{k+1}]_n})},\,\,&\mathrm{if}\,\, [\bomega_{k}]_n<[\bomega_{k+1}]_n.
\end{cases}
 \end{equation}
We can define $[\bt_k]_n$ in a more concise way as follows:
\begin{align}
    [\bt_k]_n&=\frac{\sign([\bomega_k]_n-[\bomega_{k-1}]_n)+1}{2}\tau^+_c\ln{(\frac{\omega_{\text{max}}-[\bomega_{k-1}]_n}{\omega_{\text{max}}-[\bomega_k]_n})}\nonumber\\
    &-\frac{\sign([\bomega_k]_n-[\bomega_{k-1}]_n)-1}{2}\tau^-_c\ln{(\frac{[\bomega_{k-1}]_n-\omega_{\min}}{[\bomega_k]_n-\omega_{\min}})}.
\end{align}

The algorithm steps and complexity analysis are described in the following part.

\textbf{Algorithm and complexity analysis:} 
The proposed algorithm to obtain a suboptimal solution to P1 is summarized in Algorithm~\ref{alg:cap}. We start with an initial number for $t_k^{\max}$ and $\blambda_k,\,\forall k$. At each round ($i=1:I_{\max}$), we aim in decreasing $t_k^{\max}$, \MD{where $I_{\max}$ is the maximum number of iterations and is determined in practice according to the available budget for computation time and complexity}. In each iteration ($k=1:K$), the focus is on one user, adjusting its phase shifters within the range of $(0,\omega_{\max})$ to minimize the transition time from $\bomega_{k-1}$ to $\bomega_{k}$ and also from $\bomega_{k}$ to $\bomega_{k+1}$. \MD{With the help of Lemma~\ref{theorem: fourier}, we can optimize the phase shift of each element in parallel. To do so, first, if $[\bt_{v}]_n<t_v^{\max}$ where $v=\{k,k+1\}$, we set the value of $[\bxi_g]_n=0$. Otherwise, we set $[\bxi_g]_n=1$ leading to a higher penalty for that \gls{LC}-\gls{RIS} phase shifter element. Second, to find the minimizer of each $[L_k]_n$ to minimize \eqref{eq: L_n} per element, we perform a one-dimensional line search. Specifically, we employ a grid search over the feasible phase-shift range $(0,\omega_{\max})$. This range is discretized into $L$ uniformly spaced points, and the cost function is evaluated at each point. The phase value that yields the minimum cost is then chosen as the updated solution. We name this minimizer as $[\bomega_k^g]_n$.}
Once a new phase shift is determined for the $k$-th user, the \gls{SNR} constraint is verified. If the constraint remains satisfied, $\blambda_k$ decreases to explore additional potential solutions, otherwise, if the constraint is violated, $\blambda_k$ is increased to prioritize meeting the SNR constraint. 


\MD{The dominant computational complexity of the algorithm arises from determining $[\bomega_k^g]_n$ in each iteration. Since our grid search requires evaluating the cost function at $L$ points for each of the $N$ \gls{RIS} elements, the complexity of one full update for user $k$ is $\bigO(NL)$. Therefore, the overall dominant complexity of the algorithm is $\bigO(I_{\max}NKL)$, which is linear with the number of RIS elements $N$ and the search resolution~$L$.}
\begin{algorithm}[t]
\caption{Proposed Algorithm for Problem (P4)}\label{alg:cap}
\begin{algorithmic}[1]
\STATE \textbf{Initialize:} Derive $[\bphi_k(\bu_k^{(n)})]_n$, $\forall n,\,k$ from \cite[Eqs. (5), (6) ]{Alexandropoulos2022}. Set $\bW^{(0)}=[\bphi_1, \cdots, \bphi_K]$, $\bq_k=\bq_\LOS$, $\blambda^{(0)}_k>0$ and $t_k^{\max}$, $\forall k=1, \cdots, K$. Set $0<\alpha<1$, $I_{\tmax}$, $\Delta t_k$.
\FOR{$i=1:I_{\tmax}$}
\STATE Set $[\bxi_k]_n=1$ if $[\bt_k]_n\geq t_k^{\max}$, and 0 otherwise, $\forall k,n$.
    \FOR{$k=1:K$}
    \STATE Find $[\bomega_k]^{(i)}_n=[\bomega_k^g]_n\in(0,\omega_{\max})$ as a minimizer of $L_n,\,\forall n$ in \eqref{eq: L_n} with line search.
    \STATE Calculate ${\SNR_k^{\min}}^{(i)}=\underset{\bu_k\in\Uset_k}{\min}\SNR_k^{(i)}(\bu_k)$.
    \IF{${\SNR_k^{\min}}^{(i)}<\gamma_k^\thr$}
    \STATE Update $\bomega^{(i)}_k=\bomega^{(i-1)}_k$, and $\blambda_k^{(i)}=\frac{\blambda_k^{(i-1)}}{\alpha}$.
    \ELSE
    \STATE Update $\bomega^{(i)}_k=\bomega^{g}_k$, and $\blambda_k^{(i)}=\alpha\blambda_k^{(i-1)}$.
    \ENDIF
    \ENDFOR
    \STATE Update $t_k^{\max}=t_k^{\max}-\Delta t_k$.
\ENDFOR
\end{algorithmic}
\end{algorithm}

\section{Performance Evaluation}
\label{simulation result}
\subsection{Simulation Setup}
We employ the simulation configuration for coverage extension illustrated in Fig. \ref{fig: system model}. \MD{We assume there are three users located in different and random locations with uniform distribution inside the areas $[10\pm1,2\pm1,-5]$~m, $[10\pm1,-5\pm1,-5]$~m, and $[10\pm1,5\pm1,-5]$~m, respectively.} The BS comprises a $4\times4=16$ \gls{UPA} in $\x-\z$ axes positioned at $[40,20,5]$~m.
The \gls{RIS} comprises a \gls{UPA} located at $[0,0,0]$~m, consisting of $N_\y\times N_\z=150\times5$ elements along the $\y$- and $\z$-axes, respectively \MDD{(except in Fig. \ref{fig:TDMA_time}, where the number of \gls{RIS} elements is specified in each subfigure)}.
The element spacing for \gls{UPA}s at both the BS and \gls{RIS} corresponds to half of the wavelength. The \gls{MU}s have a single antenna. The noise variance is computed as $\sigma_n^2=WN_0N_{\rm f}$ with $N_0=-174$~dBm/Hz, $W=20$~MHz, and $N_{\rm f}=6$~dB. We assume $28$~GHz carrier frequency, the path-loss $\beta=-61$~dB at $d_0=1$~m, and $\gamma_\thr=10$~dB. \MD{Given these parameters, the users' locations fall into \gls{NF} regime of the \gls{LC}-\gls{RIS} (i.e., the distances between the RIS and the center of the coverage areas are $11.4$~m, $12.2$~m, and $12.2$~m for users 1-3, respectively, which are below the Fraunhofer distance $d_\text{FF} = \frac{2D^2}{\lambda}=120~$m,
where $D$ is the largest physical dimension of the RIS \cite{Liu2023nearfield,delbari2024far}).} Moreover, we adopt path-loss exponent $\eta = (2,2,2)$ and  $K_f=(-100,10,10)$~dB for the BS-UE, BS-RIS, and RIS-\gls{MU} channels, respectively, and additional blockage $h_\text{blk}=-40$~dB for the direct link BS-UE. The main focus of this analysis is the \gls{SNR} comparison.
\MDD{As a benchmark, we consider the analytical \gls{RIS} design based on anomalous reflection in \cite{najafi2020physics} which is unaware of the transition behavior of the \gls{LC}-RIS. This applies to all figures except Fig.~\ref{fig:TDMA_time}, where we also include \gls{SDR} \cite{qingqing2019IRS,qingqing2021} as an additional benchmark. While \gls{SDR} achieves a better steady-state \gls{SNR} than analytical methods \cite{najafi2020physics}, its complexity scales with $\bigO(N^{3.5})$, making it unscalable for large arrays. Therefore, \gls{SDR} is used only in Fig.~\ref{fig:TDMA_time}, and for the remainder of the figures, 'Benchmark' refers to the analytical method \cite{najafi2020physics}.}
We adopted $\tau_c^+=9$~ms and $\tau_c^-=29$~ms based on the data provided in \cite{neuder2023architecture}.
The parameters used in this simulation are $\alpha=0.95$, $I_\tmax=35$, $P_t=38$~dBm, $t_k^{\max}=100$~ms, and $\Delta t_k=\frac{t_k^{\max}}{I_{\max}}$~ms $\forall k$.

\begin{remk}
The MATLAB codes used to generate the simulation results in this section are publicly available online at \href{https://github.com/MohamadrezaDelbari/LC-RIS-TDMA-Journal}{\textcolor{blue}{https://github.com/MohamadrezaDelbari/LC-RIS-TDMA-Journal}}.
\end{remk}

\subsection{Convergence Behavior}
\begin{figure}
    \centering
    \includegraphics[width=0.5\textwidth]{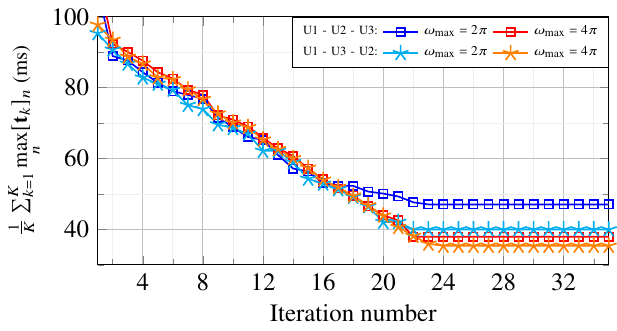}
\caption{$\frac{1}{K}\sum_{k=1}^K \underset{n}{\max}[\mathbf{t}_k]_n$ vs. number of the iteration.}
    \label{fig:deltaomega}
     \vspace{-5mm}
\end{figure}
\MD{We first discuss the convergence properties of Algorithm~\ref{alg:cap}. This algorithm is designed such that the auxiliary variables $t_k^{\max},\,\forall k$ are decreasing in each outer loop. While this does not strictly guarantee a monotonic decrease of the objective function, $\underset{n}{\max}~[\bt_k]_n,\,\forall k$, in every inner step, our simulation demonstrates that the algorithm consistently converges to a stable solution, typically within 20-30 iterations for the parameters adopted in this work.

This convergence is illustrated in Fig.~\ref{fig:deltaomega}, where} we show the averaged cost function of P1 over $K$ versus the number of iterations. As shown, these transition-time values decrease progressively with each iteration. This behavior occurs because the initial configuration ensures that $\SNR_k(\bu_k),\,\forall \bu_k\in\Uset_k$ is greater than the threshold. Furthermore, each iteration is designed to guarantee that $\SNR_k(\bu_k),\,\forall \bu_k\in\Uset_k$, in subsequent iterations, does not drop below this threshold. Consequently, the final $\underset{\bu_k\in\Uset_k}{\min}\SNR_k(\bu_k)$ value approaches the threshold, owing to the reduction in $\underset{n}{\max}\,\,[\bt_k]_n$. Additionally, the maximum differential phases of LC-RISs can be increased beyond $2\pi$ by increasing the phase shifter length ($\ell_\lc$), see \eqref{eq: omega and kappa function static}. Typically, due to \MD{periodicity} of complex exponential function, $\omega_{\max}>2\pi$ does not improve the SNR; however, as can be seen from Fig.~\ref{fig:deltaomega}, $\omega_{\max}>2\pi$ can significantly improve the LC-RIS transition time due to \MD{the influence of $\omega_{\max}$ as} can be observed in \eqref{eq: time positive} and \eqref{eq: time negative}. When the length of the LC is increased to generate a $4\pi$ differential phase shift, the transition time decreases further, highlighting the efficiency improvement.

\subsection{Histogram of Differential Phase Shift Values}
In Fig.~\ref{fig: pdf_delta_omega}, we plot the histogram of the differential phase shifts before and after optimization for $K=3$ users. \MD{In particular, the horizontal axis represents the value of the differential phase shifts in radians, ranging from $-2\pi$ to $2\pi$. This reflects how much the phase values differ between successive configurations. The vertical axis represents the frequency (or normalized probability density) of these differential values, i.e., how often each range of differential phase values occurs. For the benchmark method, the phase-shift distribution is approximately uniform. Hence, the differential phase shifts span the full range and resemble a triangular-like distribution. This is expected from subtracting two uniform distributions (the convolution of two uniform distributions is a triangular-like distribution). This uniformity is observed because for a large \gls{NF} RIS serving a designated area, the optimal phase required to satisfy the \gls{SNR} constraint varies significantly across the surface, causing the phase values to be spread across the entire $(0,2\pi)$ range.

However, for the proposed method, the distribution becomes sharply peaked around zero, indicating that most differential shifts are small. This change reflects the impact of optimizing the \gls{RIS} phase profiles to reduce configuration switching time, as smaller phase differences between successive configurations result in faster reconfiguration. Hence, this figure visually confirms the connection between smaller differential phase shifts and faster configuration times.}
\begin{figure}
    \centering
    \includegraphics[width=1\linewidth]{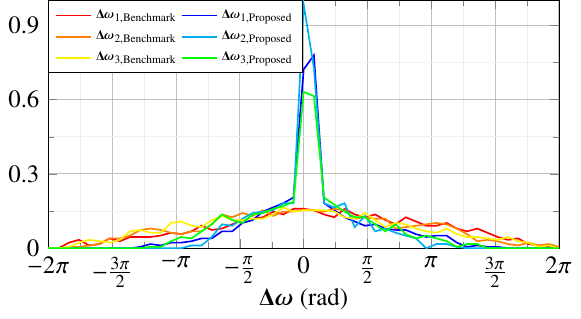}
    \caption{Histogram of $\bDelta\bomega$ in the proposed and benchmark designs for each configuration.}
    \label{fig: pdf_delta_omega}
     \vspace{-5mm}
\end{figure}

\subsection{Transition Time Reduction and Impact of Scheduling Order}
\label{sec: Transition Time Reduction and Impact of Scheduling Order}
Fig. \ref{fig:TDMA_time} shows the \gls{SNR}~(dB) when the \gls{RIS} configuration is switched every 57~ms to serve the users. The selected time interval is intended to showcase the proposed algorithm’s real-time performance.
\MDD{As is consistently illustrated in the Figs.~\ref{fig:TDMA_time_1_OPT}, \ref{fig:TDMA_time_2_OPT}, \ref{fig:TDMA_time_1}, and \ref{fig:TDMA_time_2}, our algorithm reaches the required \gls{SNR} threshold (i.e., 10 dB) significantly faster than the benchmarks (analytical and \gls{SDR}), which simply maximizes the minimum received \gls{SNR} in the targeted area without accounting for the transition time.}
\MD{In particular, our algorithm prioritizes minimizing transition time and is therefore designed to find a phase-shift configuration that satisfies the minimum SNR constraint in the targeted area as quickly as possible. Specifically, the algorithm does this by identifying a phase-shift configuration that satisfies the \gls{SNR} requirement while making adjustments to the existing \gls{RIS} phase-shifts.
\MDD{The benchmarks, conversely, only searches for the configuration that maximizes the final SNR, regardless of the time taken.}
For time-critical \gls{TDMA} applications, reaching the operational threshold quickly is an important performance metric that confirms the practical advantage of our approach.} \MDD{\gls{SDR} outperforms the analytical method by achieving a higher steady-state \gls{SNR}, but at the cost of higher complexity. Consequently, it is feasible to implement it with $N=300$. Moreover, the basis for the design of the low-complexity analytical design in \cite{Jamali2022lowtozero} is that the \gls{RIS} is extremely large. This is why the analytical benchmark performs poorly for small $N$ but well for large $N$.}

In addition, we analyze the impact of ordering users in accelerating the required time for phase configuration in RIS. In general, when $K$ users must be served in TDMA application, there are $(K-1)!$ different possibilities for ordering due to circularity. In the simplest case, let us assume $K=3$ and there are two different way to support users.
\MDD{As can be seen in Figs.~\ref{fig:TDMA_time_1_OPT}, \ref{fig:TDMA_time_2_OPT}, \ref{fig:TDMA_time_1}, and \ref{fig:TDMA_time_2}, the minimum SNR in the targeted area is plotted against the time in millisecond (ms). In all figures, there are improvements in time but in comparison of two set of configurations, the second configuration (User 1 - User 3 - User 2) can decrease more the required time compared to first configuration (User 1 - User 2 - User 3) showing the importance of ordering in \gls{LC}-RISs.}

\begin{figure*}
\begin{subfigure}{0.5\textwidth}
    \centering
    \includegraphics[width=1\textwidth]{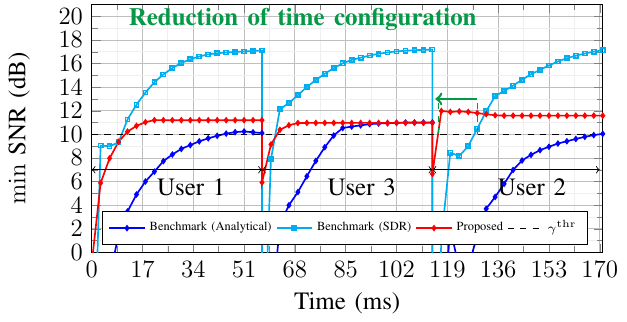}
    \caption{Scheduling order: User 1 - User 2 - User 3, and $N=300$.}
    \label{fig:TDMA_time_1_OPT}
\end{subfigure}
\begin{subfigure}{0.5\textwidth}
    \centering
    \includegraphics[width=1\textwidth]{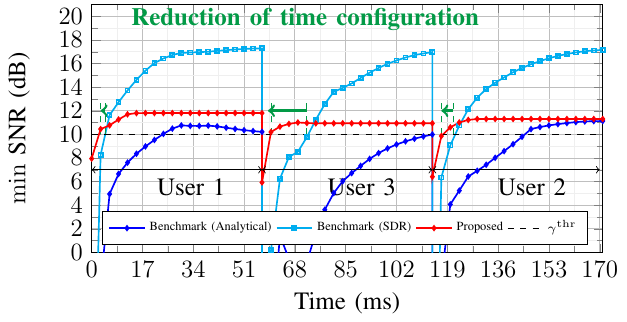}
    \caption{Scheduling order: User 1 - User 3 - User 2, and $N=300$.}
    \label{fig:TDMA_time_2_OPT}
\end{subfigure}
\begin{subfigure}{0.5\textwidth}
    \centering
    \includegraphics[width=1\textwidth]{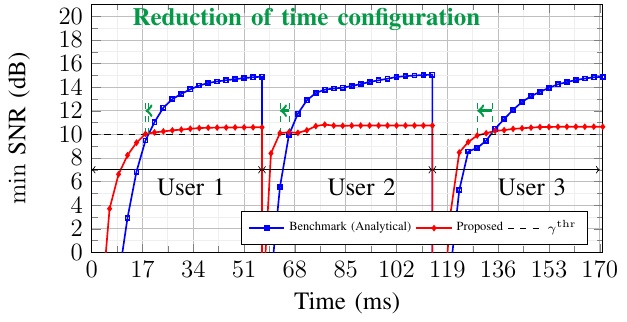}
    \caption{Scheduling order: User 1 - User 2 - User 3, and $N=700$.}
    \label{fig:TDMA_time_1}
\end{subfigure}
\begin{subfigure}{0.5\textwidth}
    \centering
    \includegraphics[width=1\textwidth]{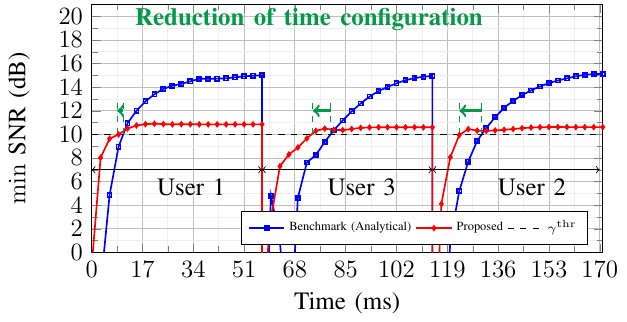}
    \caption{Scheduling order: User 1 - User 3 - User 2, and $N=700$.}
    \label{fig:TDMA_time_2}
\end{subfigure}
\caption{SNR comparison between benchmarks and proposed algorithm with 57 ms serving the users in two different cases of users' sequence.}
\label{fig:TDMA_time}
\end{figure*}

\subsection{Effective Data Rate}
In Fig.~\ref{fig:data rate}, we plot the effective data rate given by
\begin{equation}
  \label{eq:rate}
    R=\frac{\max(T_s-T_c,0)}{T_s}\log_\MD{2}{(1+\mathrm{SNR_{thr}})},
\end{equation}
where $T_c$ is the time RIS needs to reconfigure and reach $\SNR_\thr$ and $T_s$ is the time interval that RIS switches between serving the users (determined by the application scenario, e.g., delay requirement, users' mobility). Since there are multiple users, we plot the results based on the average $T_c$ across all users. When $T_s$ is smaller than this average value, the effective data rate $R$ is zero for both algorithms due to insufficient time for reconfiguration, as implied by \eqref{eq:rate}. As $T_s$ increases, the proposed algorithm achieves a higher data rate for the users. In the limit as $T_s \to \infty$, both algorithms converge to \( R \to \log_\MD{2}(1 + \SNR_\text{thr}) \). \MD{Furthermore, Fig.~\ref{fig:data rate} shows that increasing the maximum phase shift range from $2\pi$ to $4\pi$ significantly improves the effective data rate. This demonstrates a key insight of our work: even though a $2\pi$ phase range is sufficient for arbitrary beamforming at the steady state, a larger physical range ($\omega_{\max}$) directly accelerates the reconfiguration time, as modeled in \eqref{eq: time positive} and \eqref{eq: time negative}. A larger $\omega_{\max}$ provides a greater dynamic range for the phase transition, increasing the rate of phase change and thus reducing the time $T_c$ required to reach any desired intermediate phase. This highlights a crucial design trade-off where increasing the \gls{LC}-\gls{RIS} phase shifter length can be leveraged to overcome the inherent slow response time of the material.} Additionally, increasing the maximum phase shift range from $2\pi$ to $4\pi$ further improves the effective data rate in both algorithms.

\begin{figure}
    \centering
    \includegraphics[width=1\linewidth]{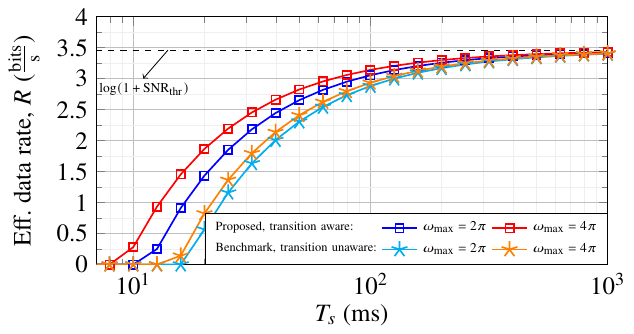}
    \caption{The average effective data rate ($R$) versus switching time between three users,
$T_s$, for two different algorithms.}
    \label{fig:data rate}
     \vspace{-5mm}
\end{figure}

\MD{
\subsection{Validation of the LoS-Dominant Link Assumption}
Figure~\ref{fig:CSI-LOS} shows the improvement in effective data rate ($R$) of the proposed transition-aware design (Algorithm~\ref{alg:cap}) over the transition-unaware benchmark as a function of the Rician $K$-factor. The red curve reports the gain when the \gls{RIS} knows only the \gls{LOS} component (derived from user location), whereas the blue curve represents an upper bound obtained with full \gls{CSI} including both \gls{LOS} and non-\gls{LOS} components. The gap to the upper bound decreases as the $K$-factor increases. For the parameters considered, the data rate achieved by the \gls{LOS}-based scheme approaches the full-\gls{CSI} performance for $K \gtrsim 3$ dB. Such $K$-factor values are typical in \gls{mmWave} environments~\cite{Mukherjee2017}, which supports the \gls{LOS}-based design adopted in this work.

Figure~\ref{fig:CSI-LOS} also confirms that Algorithm~\ref{alg:cap} is not limited to using only \gls{LOS} link information; it can also improve the effective data rate when full CSI knowledge is available. Specifically, our algorithm operates by taking an initial target phase-shift configuration, denoted as $\bW^{(0)}=[\bphi_1, \cdots, \bphi_K]$, which is assumed to be pre-computed based on the available \gls{CSI} (whether \gls{LOS}-based or full). Our algorithm then takes this $\bW^{(0)}$ as input and computes a new set of phase shifts $\bW=[\bomega_1, \cdots, \bomega_K]$ that satisfies the \gls{QoS} constraint while minimizing the transition time from the previous state.
}

\begin{figure}
    \centering
    \includegraphics[width=1\linewidth]{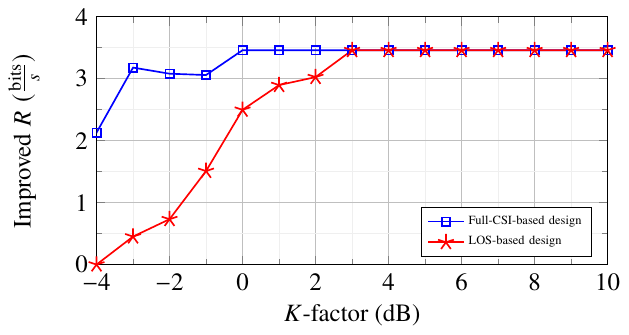}
    \caption{\MD{Improvement in effective data rate ($R$) of the proposed transition-aware design (Algorithm~\ref{alg:cap}) relative to a transition-unaware benchmark versus the Rician $K$-factor, for LoS-only knowledge and full CSI (LoS and NLoS).}}
    \label{fig:CSI-LOS}
\end{figure}

\section{Experimental Results}
\label{experimental results}
In this section, we evaluate the proposed algorithm using a small-scale proof of concept implementation of an \gls{LC}-\gls{RIS}. The experimental LC-RIS comprises  $30\times25 = 750$ elements, with the same voltage applied across all elements within a column for 1D beamforming\footnote{The reason is the experimental hardware limitations, otherwise, our proposed algorithm can also be used in 2D.}. This configuration allows the LC-RIS to reflect waves toward different azimuth angles but not varying elevation angles. These bias voltages are applied with a 1~kHz square wave. We used DAC60096 EVM with 12 bits from Texas Instruments providing us $\pm10.5$~V. As shown in Fig.~\ref{fig:LC_setup}, we used two 25dBi V-band horn antennas from MI-wave where the \gls{Tx} is fixed at an azimuth angle of $30^\circ$ and the distance of 
$1~$m relative to the LC-RIS, while the \gls{Rx} can rotate for any azimuth angle with a fixed distance of $55~$cm from the LC-RIS. For this example, the receiver is set to an angle of $-30^\circ$. The measurements have been performed with a PNA-X N5247A from Keysight Technologies. 

\begin{figure}
    \centering
    \includegraphics[width=0.35\textwidth]{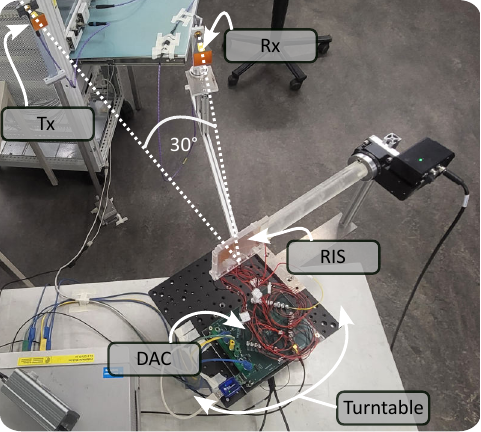}
    \caption{Experimental setup for measuring the \gls{SNR} in different positions. For more details of the setup, you can see \cite[Fig.~5]{neuder2023architecture}.}
    \label{fig:LC_setup}
    \vspace{-5mm}
\end{figure}

We evaluate two types of phase-shifter optimization strategies: benchmark and the proposed method. 
\MD{As our work focuses in addressing the dynamic reconfiguration latency of an \gls{LC}-\gls{RIS}, we consider time reduction in satisfying the \gls{QoS} as the performance metric same as Section~\ref{sec: Transition Time Reduction and Impact of Scheduling Order}. We evaluate our method against the most powerful alternative: a benchmark algorithm that is transition-unaware but is otherwise optimal for the static case, meaning it only tries to maximize the final \gls{SNR} without accounting for the transition time \cite{najafi2020physics,Alexandropoulos2022,Lu2024}.}
Consider a scenario with two receivers located at $\theta_1=-30^\circ$ and $\theta_2=30^\circ$, and a transmitter located at $\theta=0^\circ$. During a transition, the LC-RIS must shift its focus from serving the first receiver to the second. This requires a seamless adjustment of phase shifters to ensure fast transition. With linear phase shift design, as can be seen from Fig.~\ref{fig:experimental_benchmark}, it takes about $148$~ms to received power reaches the required $\SNR_\thr$ which is $10$~dB higher than noise power (note that the fabrication, i.e., the rubbing, of the alignment layer is not ideal which results in a longer response time compared to the reported values in Fig.~\ref{fig:time response+-}). However, with the proposed phase shifter design, Fig. \ref{fig:experimental_proposed}, the required time reduces to only $60$~ms at the cost of the reducing in final received SNR. These results suggest the effectiveness of the proposed modeling and algorithm design in practice. \MD{It is important to note that the starting \gls{SNR} for the second user is higher in the proposed method (Fig.~\ref{fig:experimental_benchmark}) compared to the benchmark (Fig.~\ref{fig:experimental_proposed}). This is not an artifact of the experiment but a direct result of the transition-aware algorithm. The benchmark, being unaware of the subsequent user, generates a configuration for the first user that is solely optimal for that user's link. In contrast, our algorithm finds a configuration for the first user that not only satisfies its \gls{SNR} requirement but is also 'closer' to the optimal configuration for the second user, thereby pre-conditioning the \gls{RIS} for a faster transition. Consequently, the higher initial \gls{SNR} is an integral part of the overall performance gain demonstrated by our approach.} The final received SNR for benchmark and the proposed algorithms are plotted in Figs.~\ref{fig:heatmap_benchmark} and \ref{fig:heatmap_proposed}, respectively. Although the setup is designed for $f = 60$~GHz, the received power spans a frequency range of $53-67$~GHz. Outside $f=60$~GHz, beam splitting effects are observed, which may be considered as a potential direction for future research.

\begin{figure*}
\begin{subfigure}{0.49\textwidth}
    \centering
    \includegraphics[width=1\textwidth]{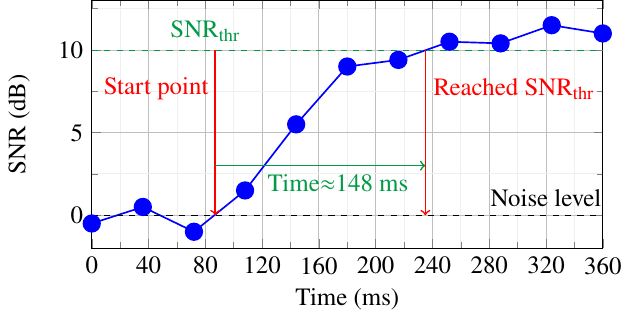}
    \caption{Measurement results for the benchmark phase shift optimization, which requires approximately 148~ms to achieve the $\SNR_\thr$. This algorithm only focuses on maximizing the final receive power.}
    \label{fig:experimental_benchmark}
\end{subfigure}
\hfill
\begin{subfigure}{0.49\textwidth}
    \centering
    \includegraphics[width=1\textwidth]{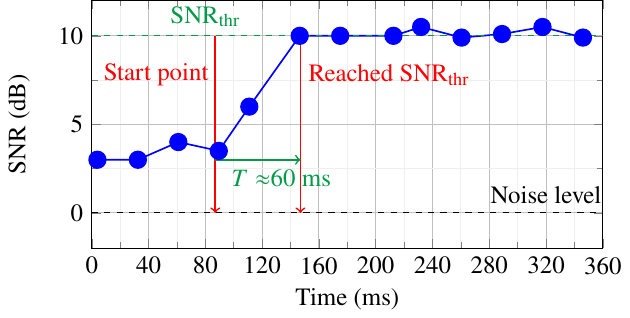}
    \caption{Measurement results for the proposed phase shift optimization, which requires approximately 60~ms to achieve the $\SNR_\thr$. This algorithm minimizes the transition time while reaching the $\SNR_\thr$.}
    \label{fig:experimental_proposed}
\end{subfigure}

\begin{subfigure}{0.48\textwidth}
    \centering
    \includegraphics[width=0.7\textwidth]{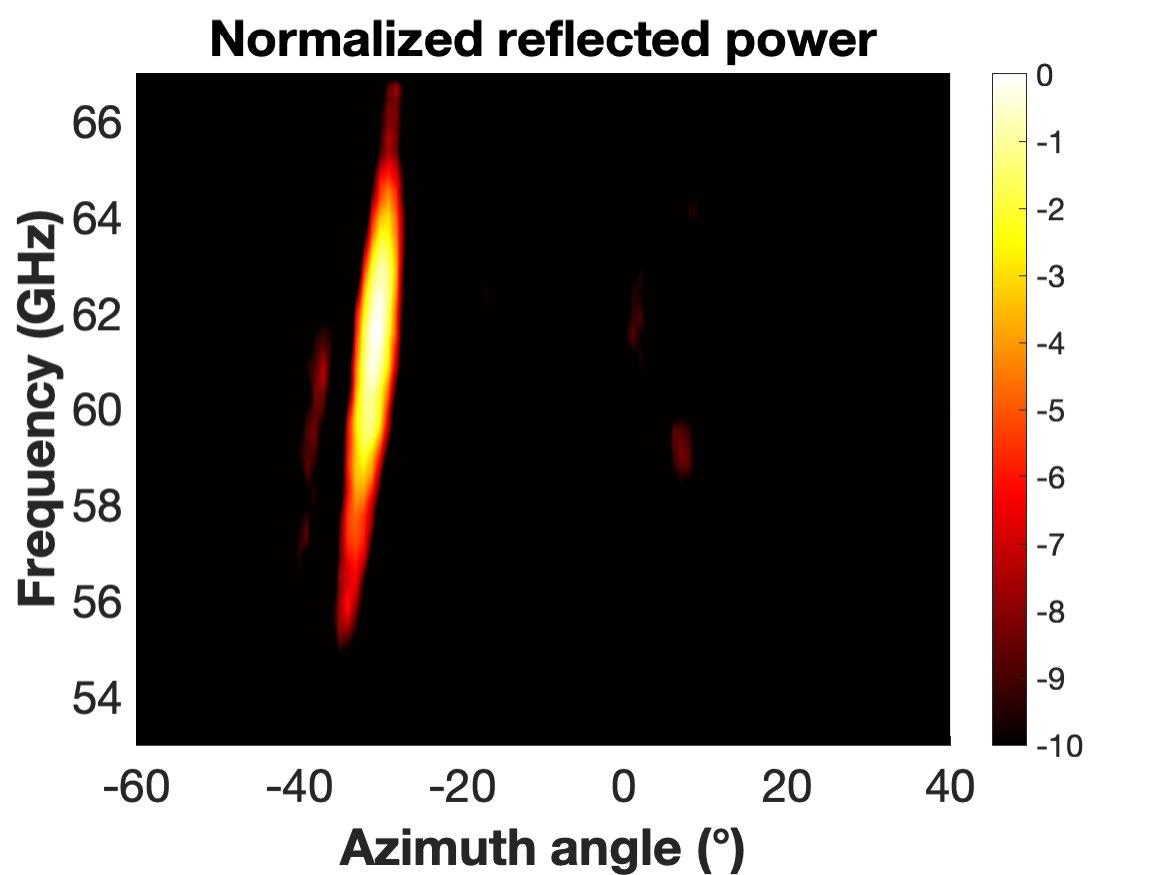}
    \caption{Measured received signal power after sufficient time. The benchmark algorithm maximizes the final received power regardless of the transition time.}
    \label{fig:heatmap_benchmark}
\end{subfigure}
\hfill
\begin{subfigure}{0.48\textwidth}
    \centering
    \includegraphics[width=0.7\textwidth]{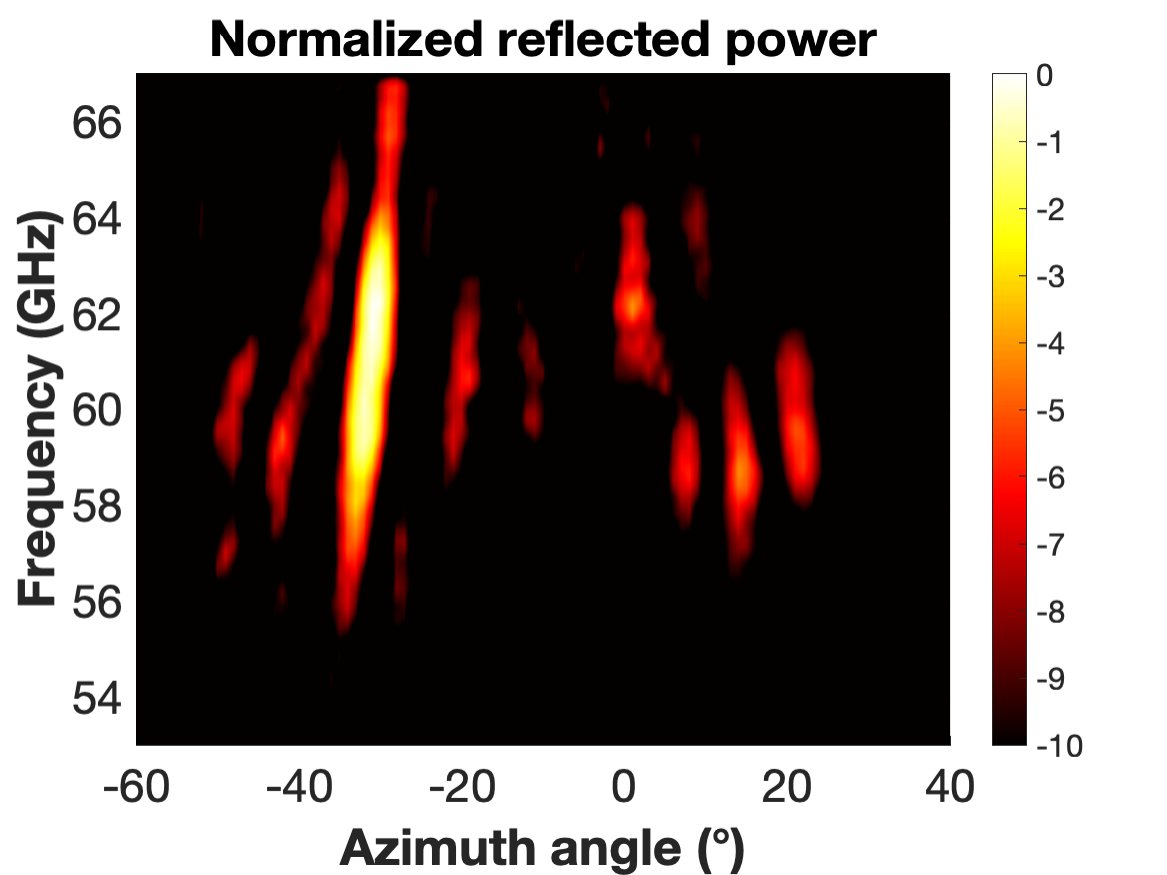}
    \caption{Measured received signal power after sufficient time. The proposed algorithm accelerates the transition time at the cost of a slight reduction in the final received power; however, it remains well above the required $\SNR_\thr$.}
    \label{fig:heatmap_proposed}
\end{subfigure}
\caption{A comparison between the two algorithms is conducted when applying new phase shifts on the LC-RIS to serve the second user. In this analysis, $\SNR_\thr$  is set to be 10~dB higher than the noise power.}
\vspace{-5mm}
\end{figure*}
\section{Conclusions and \MD{Future Directions}}
\label{conclusion}
In this paper, we have studied \gls{LC}-\glspl{RIS} as a scalable cost-effective solution to realize large \glspl{RIS} required at mmWave band to generate sufficient link budget. The primary challenge of \gls{LC}-\glspl{RIS} is however their slow response. To address this challenge, we have first developed a physics-driven model to characterize the response time of LC-RIS. Based on this model, we proposed an efficient algorithm for fast reconfiguration of LC-RISs to serve $K$ users in a \gls{TDMA} framework.  Simulation results validated the effectiveness of the proposed transition-aware design, showcasing its significant performance advantages over existing transition-unaware benchmarks in the literature.
\MDD{Specifically, our analysis and results revealed that the proposed algorithm reduces the transition time by selecting phase configurations that minimize the differential phase shift required between consecutively served \glspl{MU}. This approach effectively preconditions the \gls{RIS} state, leading to a higher starting \gls{SNR} for the subsequent user and a smaller transition range, thereby satisfying the \gls{QoS} requirement significantly faster than conventional transition-unaware designs.}
Furthermore, the experimental validation confirmed the effectiveness of the algorithm in a proof of concept implementation of LC-RIS in an indoor application.

\MD{This paper lays the necessary foundation for several extensions to fully investigate the potential of \gls{LC}-\glspl{RIS}. Although our study focused on the TDMA protocol, extending the transition-aware design to multi-user scenarios with hybrid beamforming at the BS requires a joint optimization of interference mitigation and reconfiguration latency, which constitutes a promising research direction \cite{Lu2025}. Furthermore, while the proposed algorithm is inherently robust to location estimation errors, it has been primarily developed for LOS-dominated environments. In scenarios where non-LOS links cannot be neglected, robust designs that account for imperfect full CSI are required. Finally, while our simulations highlighted the impact of scheduling order and LC phase-shifter length on performance, a more comprehensive analysis of these factors is left for future work.}


\appendices
\section{Proof of Lemma \ref{lem: PDE}}
\label{app: PDE proof}
Lemma~\ref{lem: PDE} deals with solving a modified heat \gls{PDE} that incorporates a nonlinear external source $\varepsilon_0\Delta\varepsilon E^2\Phi(\varphi)$. In the following, we solve this PDE by analyzing two specific cases:
\begin{itemize}
    \item The electric field decreases from its maximum value ($E_{\max}$) to zero (decaying in time).
    \item The electric field increases from zero to $E_{\max}$ (rising in time).
\end{itemize}

\textbf{Case 1: Decaying in time:} When $E=0$, the nonlinear term $\varepsilon_0\Delta\varepsilon E^2\Phi(\varphi)$ in \eqref{eq: elastic 1D simple} vanishes, reducing the \gls{PDE} to a standard heat equation:
\begin{subequations}
\begin{align}
        \label{eq: PDE 1D simple Heat equation}
&\gamma_1\frac{\partial\varphi}{\partial t}=\bar{K}\frac{\partial^2 \varphi}{\partial z^2}
    \\&~\text {C1, C2, and C3}.
\end{align}
\end{subequations}
Using the method of separation of variables \cite{kreyszig}, the general solution of \eqref{eq: PDE 1D simple Heat equation} subject to C1 and C2 is:
\begin{equation}
\label{eq: decaying time}
    \varphi(z,t)=\sum_{n=1}^\infty A_n\sin(\frac{n\pi z}{d})\e^{-\frac{\bar{K}}{\gamma_1}(\frac{n\pi}{d})^2t},
\end{equation}
where the coefficients $A_n$ are determined by the initial condition:
\begin{equation}
    A_n=\frac{2}{d}\int_0^d\varphi(z,0)\sin(\frac{n\pi z}{d})\dd z.
\end{equation}
Due to the initial condition in C3: $\varphi(z,0)=A_1\sin(\frac{\pi z}{d})$, we find that $A_n = 0$ for all $n > 1$. Hence, the solution simplifies to:
\begin{equation}
\label{eq: decaying time simplified}
    \varphi(z,t)=0+ (A_1-0)\sin(\frac{\pi z}{d})\e^{-\frac{t}{\tau_\mol^-}},
\end{equation}
where $\tau_\mol^-=\frac{\gamma_1}{\bar{K}}(\frac{d}{\pi})^2$.\\

\textbf{Case 2: Rising in time:} 
When $E=E_{\max}$, the nonlinear term 
$\varepsilon_0\Delta\varepsilon E^2\Phi(\varphi)$  becomes significant. This \gls{PDE} has been solved in \cite{wang2004correlations,Wang2005} with the following solution:
\begin{equation}
\label{eq: arising time}
    \varphi(z,t)= \frac{\pi}{2}\sqrt{\frac{1}{1+(\frac{\pi^2}{4A_1^2}-1)\exp(-\frac{t}{\tau_r})}}\sin(\frac{\pi z}{d}),
\end{equation}
where $\tau_r=\frac{\gamma_1}{2|\varepsilon_0\Delta\varepsilon E_{\max}^2-\frac{\pi^2}{d^2}\bar{K}|}\propto E_{\max}^{-2}$\footnote{$E_t\defeq\sqrt{\frac{\bar{K}}{\varepsilon_0\Delta\varepsilon}}\frac{\pi}{d}$ represents the Freedericksz threshold field necessary to initiate rotation of the LC molecules.} (recall $E_{\max}\gg\frac{\pi^2\bar{K}}{d^2\varepsilon_0\Delta\varepsilon}$). We now prove that the following approximation instead of \eqref{eq: arising time} has an error bounded by $\bigO(E_{\max}^{-2})$:
\begin{equation}
\label{eq: arising time final}
        \varphi(z,t)=\left(\frac{\pi}{2}-(\frac{\pi}{2}-A_1)\e^{-\frac{t}{\tau_\mol^+}}\right)\sin(\frac{\pi z}{d}),
\end{equation}
where $\tau_\mol^+=\alpha\tau_r$, and $\alpha\geq1$ is a tuning parameter (in this case, we know $\varphi(z,\infty)=\frac{\pi}{2}\sin(\frac{\pi z}{d})$ after applying $E=E_{\max}$ \cite{wang2004correlations,Wang2005}).

\textit{Error Analysis:} To analyze the error, we compute the integral of the absolute difference between \eqref{eq: arising time} and \eqref{eq: arising time final} over time. For an upper bound, we consider $z=\frac{d}{2}$, and $\alpha=1$:
\begin{align}
    \int_0^\infty &\frac{\pi}{2}\left(1-\sqrt{\frac{1}{1+(\frac{\pi^2}{4A_1^2}-1)\exp(-\frac{t}{\tau_\mol^+})}}\right)\nonumber\\
    +&(\frac{\pi}{2}-A_1)\e^{-\frac{t}{\tau_\mol^+}} \dd t.
\end{align}
Evaluating this integral yields:
\begin{equation}
    \tau_\mol^+\left(\underbrace{\log\left(\frac{\frac{\pi}{2A_1}+1}{\frac{\pi}{2A_1}-1}\right)+\log\left(\frac{\frac{\pi^2}{4A_1^2}-1}{4}\right)+\frac{\pi}{2}-A_1}_{\beta}\right).
\end{equation}
Hence, the error is bounded by $\tau_\mol^+\beta$, where $\tau_\mol^+\propto E_{\max}^{-2}$. This confirms that the approximation in \eqref{eq: arising time final} holds with an error of $\bigO(E_{\max}^{-2})$. This concludes the proof.

\section{Proof of Proposition~\ref{prop: linearity in time}}
\label{app: linearity in time}
To characterize the time-dependent behavior of the LC phase shifter, we begin by considering \eqref{eq: omega and kappa function static}. In this equation, \(\ell_\lc\) and \(\kk_\perp\) remain constant over time, while \(\kk(t)\) is time-dependent. In general, \(\kk(t)\) is given by:
\begin{equation}
\label{eq: kappa in time}
\kk(t)=\frac{1}{d}\int_0^d \left(\left(\frac{l(\varphi(z,t))}{a}\right)\kk_\lc+\left(\frac{a-l(\varphi(z,t))}{a}\right)\kk_0\right) \dd z,
\end{equation}
where $a$ represents the semi-major axis of the ellipsoidal shape LC, $l(\varphi(z,t))$ is the effective length of the LC region contributing to the phase shift, $\kk_\lc$ and $\kk_0$ are the wave numbers in LC material and vacuum, respectively, see Fig.~\ref{fig:lc-ab}. As shown in the left figure of Fig.~\ref{fig:lc-ab}, to compute \(\kk_\perp\), we substitute $l(\varphi(z,t)) = b,\,\forall z$ in \eqref{eq: kappa in time}, where $b$  is the semi-minor axis of the LC ellipsoid. Then, we can express the dynamic phase shift function in \eqref{eq: omega and kappa function static} as:
\begin{equation}
    \label{eq: omega and time}
    \omega(t)=\ell_\lc\frac{1}{d}\int_0^d \left(\left(\frac{l(\varphi(z,t))-b}{a}\right)\left(\kk_\lc-\kk_0\right)\right) \dd z.
\end{equation}
In the following, we first derive the function of $l(\varphi)$. Then, we establish that the phase shift generated by each \MD{\gls{LC}-\gls{RIS} phase shifter} evolves as a summation on exponential functions of time. We begin by proving that the effective length of an LC molecule, depicted in Fig. \ref{fig:lc-ab}, is given by:
\begin{equation}
    l(\varphi)=b\sqrt{\frac{1}{1+(\frac{b^2}{a^2}-1)\sin^2(\varphi)}},
\end{equation}
\begin{figure}
    \centering
    \includegraphics[width=0.5\textwidth]{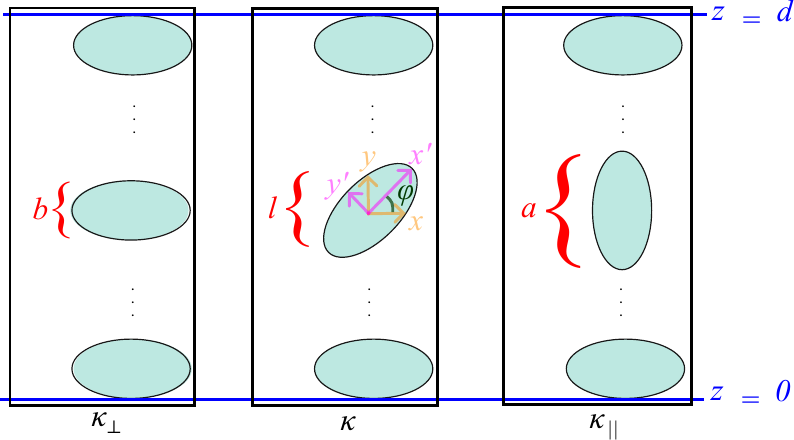}
    \caption{A rotated ellipse shape \gls{LC} molecule with semi-major and semi-minor diameters $a$ and $b$, respectively.}
    \label{fig:lc-ab}
    \vspace{-5mm}
\end{figure}

\textbf{Derivation of $l(\varphi)$:} The result can be derived using the following coordinate transformation:
\begin{equation*}
\begin{bmatrix}
x\\
y
\end{bmatrix}
=\begin{bmatrix}
\cos(\varphi) & \sin(\varphi)\\
-\sin(\varphi) & \cos(\varphi)
\end{bmatrix}
\begin{bmatrix}
x'\\
y'
\end{bmatrix}.
\end{equation*}
From Fig. \ref{fig:lc-ab}, we know that the effective length is $l=2|y|\big|_{x=0}$. Substituting $x=0$ into the transformation equations, we obtain:
\begin{subequations}
\begin{equation}
    y'=-\cot(\varphi)x'
\end{equation}
\begin{equation}
    y=(-x'\sin(\varphi)+y'\cos(\varphi)).
\end{equation}
\end{subequations}
Combining these two equations yields:
\begin{equation}
    x'=\frac{-y\sin(\varphi)}{2},\,\quad y'=\frac{y\cos(\varphi)}{2}.
\end{equation}
Next, we substitute these expressions into the equation of the ellipse:
\begin{equation}
    \frac{4x'^2}{a^2}+\frac{4y'^2}{b^2}=1.
\end{equation}
After substitution and simplification, we solve for $y$ and find:
\begin{equation}
    y=\pm\frac{ab}{2\sqrt{a^2\cos^2(\varphi)+b^2\sin^2(\varphi)}}.
\end{equation}
Since $l=2|y|$, the effective length becomes:
\begin{equation}
\label{eq: function l(varphi)}
    l=b\sqrt{\frac{1}{1+(\frac{b^2}{a^2}-1)\sin^2\varphi}}.
\end{equation}
This concludes the derivation of $l(\varphi)$.

\textbf{Expansion of the Phase Shift $\omega(t)$:} Using the result from the first part and Lemma~\ref{lem: PDE}, we expand $l(\varphi(z,t))$ around $\varphi(z,\infty)$ by exploiting Taylor series as:
\begin{equation}
\label{eq: l and varphi}
l(\varphi(z,t))=b(1+\sum_{p=0}^{\infty}C_p(\varphi(z,t)-\varphi(z,\infty))^{p}).
\end{equation}
where $C_p$ are the coefficients of the expansion.
Firstly, substitute \eqref{eq: varphi final} in \eqref{eq: l and varphi}. Secondly, substitute the result into \eqref{eq: omega and time} and interchange the summation and integral, we get:
\begin{equation}
\label{eq: sum integral}
    \omega(t)=\ell_\lc\sum_{p=0}^{\infty} \e^{-\frac{pt}{\tau_\mol}} \underbrace{\frac{C_p}{ad}\int_0^d (\varphi(z,0)-\varphi(z,\infty))^{p}(\kappa_{\lc}-\kappa) \dd z}_{\text{A function of $p$}}.
\end{equation}
This finally yields to:
\begin{equation}
\label{eq: sum integral final}
    \omega(t)=\ell_\lc\sum_{p=0}^{\infty}D_p\e^{-\frac{pt}{\tau_\mol}},
\end{equation}
where $D_p$ are constant coefficients. This concludes the proof.

\section{Proof of Lemma~\ref{theorem: fourier}}
\label{app: theorem fourier}
    By substituting the phase shifter vector ($\bv$) in \gls{SNR} definition \eqref{eq: SNR new}, we have
    \begin{equation}
    \begin{array}{ll}
        &\SNR_k(\bu_k^{(p)})=\sum_{n=1}^N\sum_{m=1}^N\e^{-\jj[\bomega_k]_n}[\bbM_k(\bu_k^{(p)})]_{n,m}\e^{\jj[\bomega_k]_m}.
    \end{array}
    \end{equation}
    Because \gls{LOS} link is dominant on both BS-RIS and RIS-MU, we can decompose $[\bbM_k(\bu_k^{(p)})]_{n,m}=\e^{[\bphi_k(\bu_k^{(p)})]_{n}}\e^{-[\bphi_k(\bu_k^{(p)})]_{m}}$ where phase shifter $[\bphi_k(\bu_k^{(n)})]_{p},\,\forall n$ can be derived by \eqref{Eq:LoSnear} \cite{delbari2024nearfield}. Let us expand SNR around the phase shifter $[\bomega]_p$ where all other phase shifters for any $m\neq p$ are fixed. So
    \begin{equation}
        \begin{array}{ll}
        \label{eq: SNR final final}
            &\SNR_k(\bu_k^{(p)})=\e^{[\bomega]_p-[\bphi_k(\bu_k^{(p)})]_{p}}\sum_{m\neq p}\e^{-[\bomega]_m+[\bphi_k(\bu_k^{(p)})]_{m}}\\
            &+\e^{-[\bomega]_p+[\bphi_k(\bu_k^{(p)})]_{p}}\sum_{m\neq p}\e^{[\bomega]_m-[\bphi_k(\bu_k^{(p)})]_{m}}\\
            &+1+\sum_{m\neq p}\e^{[\bomega]_m-[\bphi_k(\bu_k^{(p)})]_{m}}\sum_{m\neq p}\e^{-[\bomega]_m+[\bphi_k(\bu_k^{(p)})]_{m}}.
        \end{array}
        \end{equation}
Let us define $[r_k]_p\e^{[\theta_k]_p}\defeq\sum_{m\neq p}\e^{[\bomega]_m-[\bphi_k(\bu_k^{(p)})]_{m}}$, where $[r_k]_p$ and $[\theta_k]_p$ are amplitude and phase, respectively. For sufficiently large $N$, $[\theta_k]_p$ tends to be zero $\forall p$ \footnote{This result can be derived by evaluating the integral over $[0, 2\pi]$, $\forall m\neq p$. However, due to space constraints, the detailed derivation is omitted}. Therefore, \eqref{eq: SNR final final} can be decomposed as $c+[r_k]_p\cos([\bomega_k]_p-[\phi_k(\bu_k^{(n)})]_p)$ where $c$ is a constant for $k$th user. Substituting this result into \eqref{eq: lagrangian user k over n} gives us:
\begin{align}
        [L_k]=&C_k+[\bxi_{k+1}]_p[\bt_{k+1}]_p+[\bxi_k]_p[\bt_k]_p\nonumber\\
    &-\underbrace{[\blambda_k]_p[r_k(\bu_k^{(n)})]_p}_{\text{New }[\blambda_k]_p}\cos([\bomega_k]_p-\bphi_k(\bu_k^{(p)})).
\end{align}
This concludes the proof.

\bibliographystyle{IEEEtran}
\bibliography{References}

\begin{IEEEbiography}[{\includegraphics[width=1in,height=1.25in,clip,keepaspectratio]{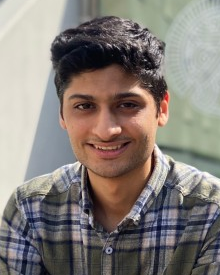}}]{Mohamadreza Delbari}
 (Student Member, IEEE) received the B.Sc. degree in Electrical Engineering from Ferdowsi University, Mashhad, Iran, in 2019, and his M.Sc. degree in Telecommunications from Sharif University of Technology (SUT), Tehran, Iran, in 2022. In 2023, he joined the Resilient Communication Systems Lab, Department of Electrical Engineering and Information Technology, Technical University of Darmstadt (TUDa), where he is currently working as a Researcher. His research interests include Reconfigurable intelligent surfaces (RIS), Near-field communications, and Liquid crystal RIS (LC RIS). He has served as a reviewer for several journals and conferences, including the \textsc{IEEE Transactions on Wireless Communications}, \textsc{IEEE Transactions on Communications}, \textsc{IEEE Wireless Communications Letters}, and the \textsc{IEEE Open Journal of the Communications Society}.
\end{IEEEbiography}
\newpage
\begin{IEEEbiography}[{\includegraphics[width=1in,height=1.25in,clip,keepaspectratio]{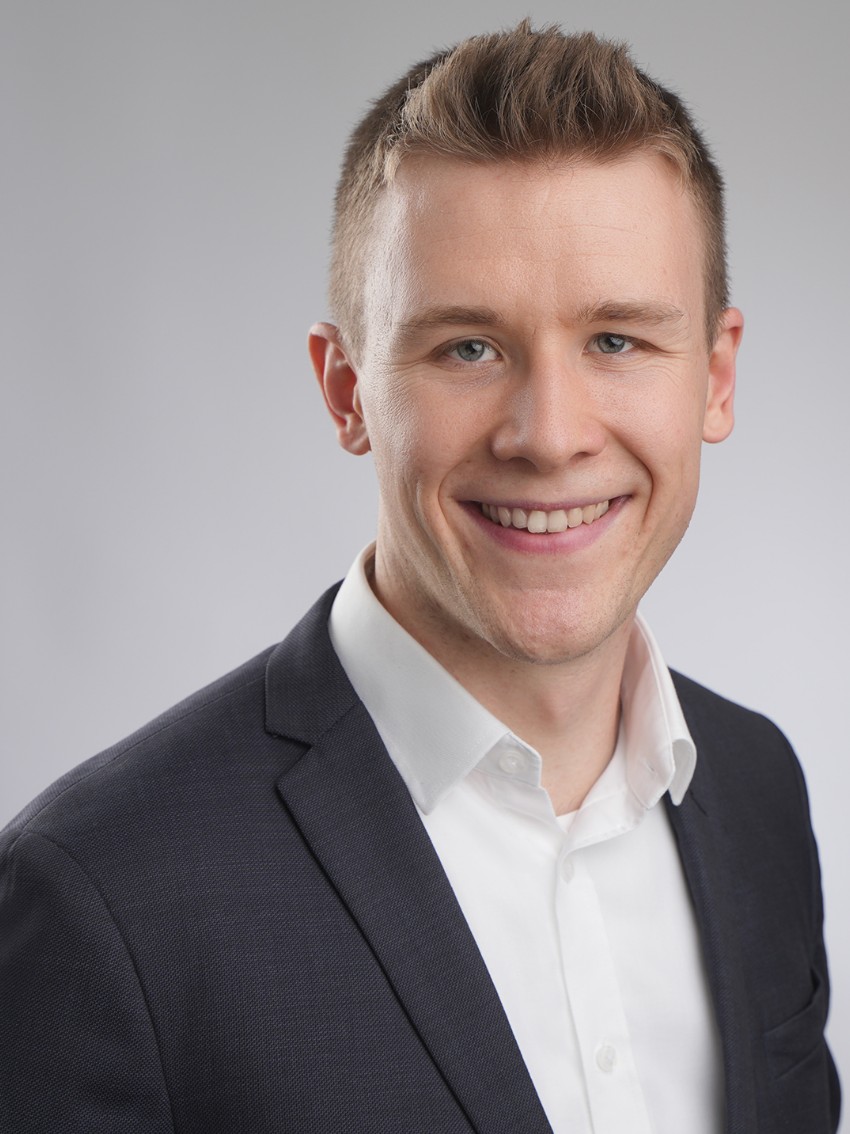}}]{Robin Neuder}
(Student Member, IEEE) received the B.S. and M.S. degree in electrical engineering and information technology from TU Darmstadt, Germany in 2019 and 2021, respectively. Since 2021, he has been a Research Assistant with the Institute of Microwave Engineering and Photonics at TU Darmstadt. His research interests include compact liquid-crystal based tuneable planar devices and their integration into Reconfigurable Intelligent Surfaces.
\end{IEEEbiography}

\begin{IEEEbiography}[{\includegraphics[width=1in,height=1.25in,clip,keepaspectratio]{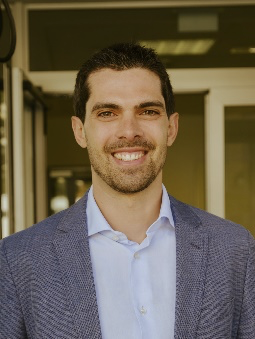}}]{Alejandro Jiménez-Sáez}
    (Member, IEEE) (S’15, M’22) received the master’s degrees (Hons.) in telecommunications engineering from the Polytechnic University of Valencia, Spain, and in electrical engineering from the Technical University of Darmstadt, Germany, in 2017. In 2021, he received the Dr.-Ing. degree (with distinction) in electrical engineering from the TU Darmstadt and the Freunde der TU Darmstadt prize for the best dissertation in electrical engineering. He obtained the Athene Young Investigator award at TU Darmstadt and leads the Smart RF Systems based on Artificial and Functional Materials independent research group. His current research interests include chipless RFID, electromagnetic bandgap structures, liquid crystal, and reconfigurable intelligent surfaces.
\end{IEEEbiography}
\begin{IEEEbiography}[{\includegraphics[width=1in,height=1.25in,clip,keepaspectratio]{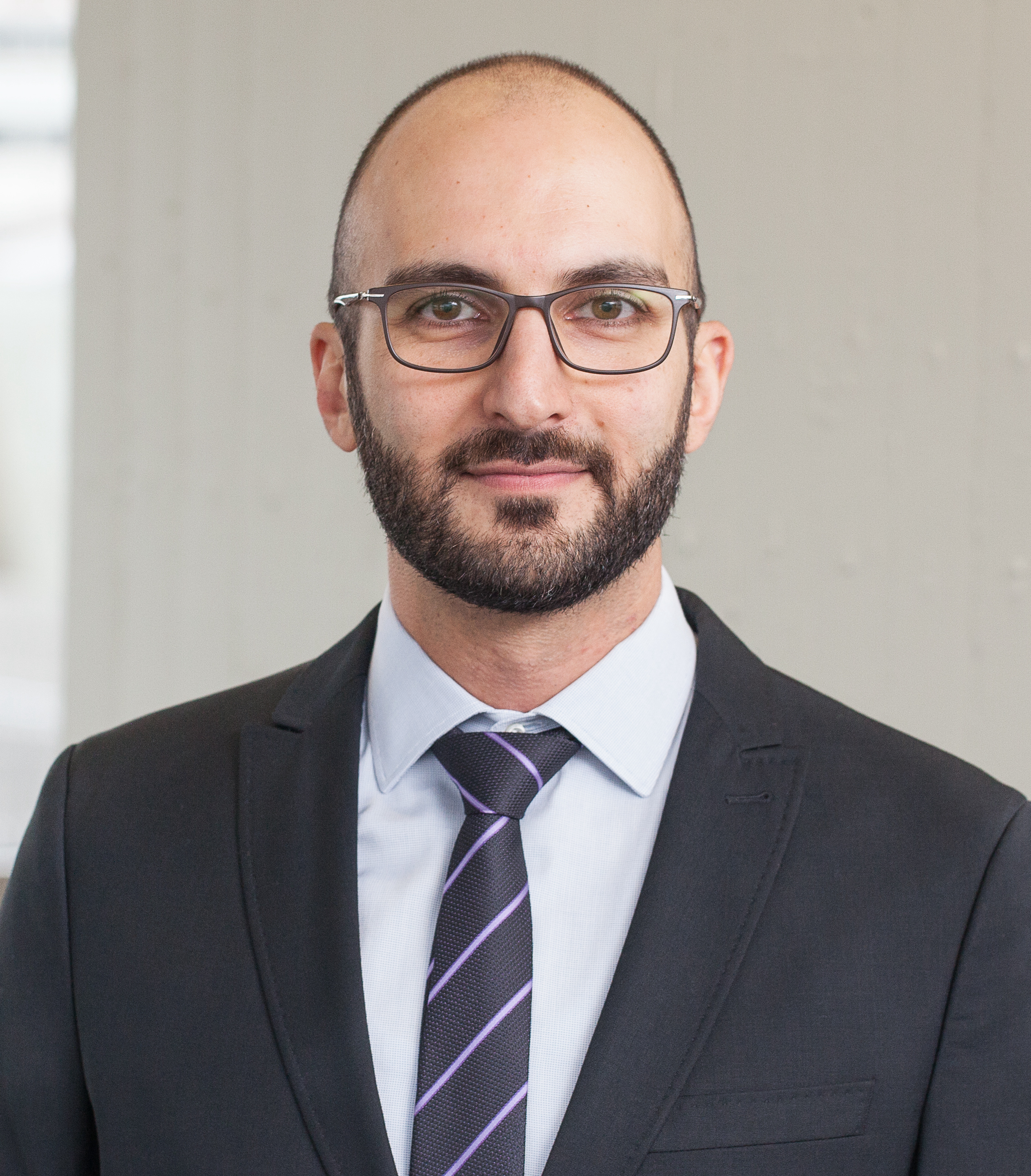}}]{Arash Asadi}
(Senior Member, IEEE) is currently
an Assistant Professor with the Embedded Systems
Group, Delft University of Technology (TU Delft),
Delft, Netherlands, where he leads the Wireless Communication and Sensing Laboratory (WISE). His research interests include wireless communication and
sensing in 6G networks. He was the recipient of several awards, including the Athena Young Investigator
Award from Technische Universität Darmstadt and
Educational Fellow from TU Delft.
\end{IEEEbiography}
\begin{IEEEbiography}[{\includegraphics[width=1in,height=1.25in,clip,keepaspectratio]{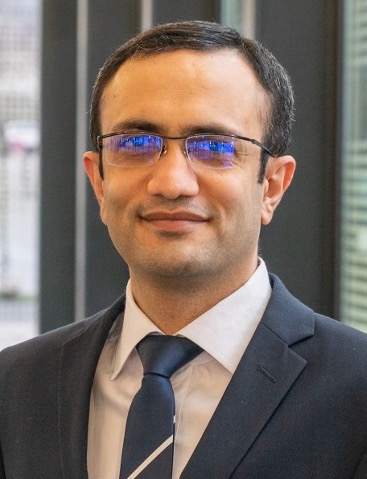}}]{Vahid Jamali}
(Senior Member, IEEE) received the doctoral degree (with distinctions) from Friedrich-Alexander University Erlangen-Nürnberg (FAU) in 2019. He has been an Assistant Professor with the Technical University of Darmstadt (TUDa), since 2022, leading the Resilient Communication Systems Lab. Prior to joining TUDa, he held academic appointments at Princeton University (2021–2022) and FAU (2019– 2021), as a Post-Doctoral Researcher; and at Stanford University as a Visiting Researcher in 2017. His research interests include wireless and molecular communications. He has served as an Associate Editor of the \textsc{IEEE Transactions on Communications} and \textsc{IEEE Communications Letters}, an Editor-at-Large at \textsc{IEEE Open Journal of the Communications Society} as well as a Vice-Chair for the IEEE ComSoc -- German chapter. He has received several awards for his publications including the Best Paper Awards from the IEEE ICC in 2016, the ACM NanoCOM in 2019, the Asilomar CSSC in 2020, the IEEE WCNC in 2021; and the ACM NanoCOM in 2025, the Best Journal Paper Award (Literaturpreis) from the German Information Technology Society (ITG) in 2020, and the Best Paper Award of TAOS Technical Committee of IEEE ComSoc in 2024.
\end{IEEEbiography}


\end{document}